%% file: 0-paper.tex
\documentclass[conference]{IEEEtran}
\IEEEoverridecommandlockouts
\usepackage{cite}
\usepackage{amsmath}
\usepackage{amssymb}
\usepackage{amsfonts}

\usepackage[ruled, linesnumbered, vlined]{algorithm2e}

\usepackage{array}
\usepackage{url}
\usepackage{float}
\usepackage{graphicx}
\graphicspath{{../}{./}}
\usepackage{multirow}
\usepackage{subfigure}
\usepackage{multicol}
\usepackage{tabu}
\usepackage{listings}
\usepackage{makecell}
\usepackage{enumerate}
\usepackage{color}
\usepackage[english]{babel}
\usepackage{threeparttable}
\usepackage{upgreek}
\usepackage{booktabs}
\usepackage{xcolor}
\usepackage{amsthm}
\usepackage{enumitem}
\usepackage{blindtext}
\usepackage{textcomp}
\usepackage{bm}
\usepackage{xspace}
\usepackage{colortbl} % For coloring table cells
\usepackage{comment}
\usepackage[normalem]{ulem}
\usepackage[textsize=tiny,textwidth=0.6in]{todonotes}

\usepackage{tikz}
\usetikzlibrary{matrix, arrows.meta}
\usetikzlibrary{calc}

\newcommand{\ie}{\emph{i.e.,}\xspace}
\newcommand{\eg}{\emph{e.g.,}\xspace}
\newcommand{\etc}{\emph{etc.}\xspace}

\newcommand{\MyPara}[1]{\vspace{0.1em}\noindent\textit{\textbf{#1}}}
\definecolor{dkgreen}{rgb}{0,0.6,0}
\definecolor{gray}{rgb}{0.5,0.5,0.5}
\definecolor{mauve}{rgb}{0.58,0,0.82}

\newif\ifappendix
\appendixtrue
\newcommand{\apx}[2]{\ifappendix #1\else #2\fi}

\newif\ifshowchanges
\showchangesfalse
\ifshowchanges
\newcommand{\rev}[1] {{\textcolor{cyan}{{#1}}}}
\newcommand{\del}[1]{\textcolor{red}{\sout{#1}}}
\else
\newcommand{\rev}[1]{#1}
\newcommand{\del}[1]{}
\fi

\newcommand{\mysection}[1]{\section{#1}}
\newcommand{\mysubsection}[1]{\subsection{#1}}
\newcommand{\mysubsubsection}[1]{\subsubsection{\normalsize #1}}

\newcommand{\SYS}{\texttt{Lizard}\xspace}
\newcommand{\tool}{\texttt{Lizard}\xspace}

\theoremstyle{plain}

\newtheorem{proposition}{Proposition}
\newcommand{\squishlist}{
   \begin{list}{$\bullet$}
    { \setlength{\itemsep}{0pt}      \setlength{\parsep}{3pt}
      \setlength{\topsep}{3pt}       \setlength{\partopsep}{0pt}
      \setlength{\leftmargin}{1.0em} \setlength{\labelwidth}{1em}
      \setlength{\labelsep}{0.5em} } }
\newcommand{\squishend}{
    \end{list}  }

\begin{document}

\title{\SYS: Bandwidth-Adaptive Real-Time Video Analytics through Content-Aware Packet Discarding at Last-Mile Edge Routers
}
\author{
\IEEEauthorblockN{
Shan Yu\IEEEauthorrefmark{1},
Yu Chen\IEEEauthorrefmark{2},
Yifan Qiao\IEEEauthorrefmark{3},
Sheng Zhang\IEEEauthorrefmark{4},
Ravi Netravali\IEEEauthorrefmark{5},
Harry Xu\IEEEauthorrefmark{1}
}

\IEEEauthorblockA{\IEEEauthorrefmark{1}
University of California, Los Angeles\\
Email: \{shanyu1, harryxu\}@cs.ucla.edu
}

\IEEEauthorblockA{\IEEEauthorrefmark{2}
Renmin University of China\\
Email: chenyu97@ruc.edu.cn
}

\IEEEauthorblockA{\IEEEauthorrefmark{3}
University of California, Berkeley\\
Email: yifanqiao@berkeley.edu
}

\IEEEauthorblockA{\IEEEauthorrefmark{4}
Nanjing University\\
Email: sheng@nju.edu.cn
}

\IEEEauthorblockA{\IEEEauthorrefmark{5}
Princeton University\\
Email: rnetravali@cs.princeton.edu
}

% \vspace{-2em}
}

\maketitle

\begin{abstract}
The timeliness and accuracy of edge-based video analytics can be hindered by drastic reductions in available bandwidth (ABW) at \emph{last-mile edge routers}, causing prolonged queuing delays.
This work proposes \SYS, a system that leverages \emph{video-content-aware packet discarding} to mitigate the negative effects of drastic ABW degradation that may frequently occur at a last-mile edge router by judiciously discarding packets that contain frame blocks less important to the analytics at the destination.
To achieve this, we first devise a frame-block-aware RTP header extension to effectively decouple packet dependencies to encode frame blocks.
Second, \SYS uses a priority-based feedback mechanism that dynamically evaluates packet priorities based on relative accuracy impacts.
Third, we develop an adaptive phase-transition-based packet discarding strategy at the router to discard packets that represent unimportant blocks.
Our evaluation of \SYS shows improvements over existing methods are substantial: 53.2\% reduction in latency and 27.1\% increase in analysis accuracy.
\end{abstract}

\begin{IEEEkeywords}
edge computing, video analytics, packet discarding, last-mile router, bandwidth adaptation
\end{IEEEkeywords}

\input{1-introduction}
\input{3-motivation}

\input{4-design}

\input{6-evaluation}

\input{2-relatedwork}

\input{7-conclusion}

%%
%% The next two lines define the bibliography style to be used, and
%% the bibliography file.
\bibliographystyle{IEEEtran}
\bibliography{reference}

%%
%% If your work has an appendix, this is the place to put it.
\ifappendix
\appendices
\input{Appendix_1}
\fi

\end{document}

%% file: 1-introduction.tex
\mysection{Introduction
\label{sec:introduction}}
%\textbf{[Background]}
Recent advances in computer vision \cite{voulodimos2018deep, szeliski2022computer, granlund2013signal, weinstein2018computer} have transformed video analytics across various applications. 
Driven by the rapid proliferation of IoT devices and smart city infrastructures, edge-based video analytics has become increasingly important for supporting latency-sensitive and scalable vision services\cite{badidi2023opportunities, hu2023edge, gong2025survey}. 
Edge computing \cite{cao2020overview, mao2017survey, chen2019deep, shi2016edge, khan2019edge} processes video data near the source, enhancing privacy \cite{wang2017scalable}, network efficiency \cite{canel2019scaling}, and reducing latency \cite{yi2017lavea, yang2019edge, chen2023octopus}.
This is crucial for edge-based, real-time analytics \cite{liu2018edgeeye, li2020reducto} in applications like autonomous driving \cite{lin2022low, ni2023cellfusion}, drone navigation \cite{qu2021dronecoconet}, robotic perception and vision-based control  \cite{tahir2025edge, grover2026embodied, williams2025lite}, and augmented reality \cite{xie2020interactive, liu2019edge}, demanding low latency and high accuracy \cite{ananthanarayanan2017real, guo2021crossroi} even in poor network conditions. 

To meet the stringent latency requirements in a dynamic network environment, two main categories of work have been conducted. 
At the network level, real-time communication (RTC) technologies \cite{loreto2014real} leverage the real-time transport protocol (RTP) \cite{frederick2003rtp} for reliable transmission and real-time control protocol (RTCP) \cite{sarker2021rtp} for quality monitoring and adaptive responses to bandwidth fluctuations.
%On one hand, at the network level, real-time communication (RTC) technologies \cite{loreto2014real, rahaman2015survey} use the real-time transport protocol (RTP) \cite{frederick2003rtp, schulzrinne2003rfc3550} for reliable data transmission and the real-time control protocol (RTCP) \cite{sarker2021rtp, novotny2008large} for quality monitoring and real-time response to network bandwidth fluctuation. 
Google congestion control (GCC) \cite{carlucci2016analysis} then optimizes bandwidth usage by dynamically adjusting transmission rates for video data. % manages bandwidth to enable efficient video streaming for low-latency edge-based real-time video analytics.
%Google congestion control (GCC) \cite{carlucci2016analysis} was proposed to manage bandwidth dynamically, ensuring efficient video streaming and supporting high-performance edge-based real-time video analytics in latency-sensitive environments.
%\shan{We may consider shortening or even omitting the discussion of RTC in the Intro, as it is not central to our main storyline, while requiring large space to explain clearly.}
At the application level, a large body of work \cite{wang2020joint, zhang2021adaptive, li2020reducto, xie2019source, du2022accmpeg, canel2019scaling, yuan2022infi, han2015deep, yuan2023packetgame, yuan2023accdecoder, mi2024accelerated} aims to optimize the network efficiency of video analytics from different vantage points in a pipeline, \eg at edge devices~\cite{li2020reducto, wang2020joint, zhang2021adaptive},  edge servers~\cite{gemel-nsdi23}, and cloud servers~\cite{canel2019scaling, yuan2022infi, yuan2023accdecoder}. 

\MyPara{Last-Mile Edge Routers.} Despite these efforts, the performance of edge-based, real-time video analytics can still be severely degraded by fluctuating network capacity and available bandwidth (ABW). As illustrated in Figure~\ref{fig:Motivation-Background},
%\yq{``user'' in the figure is a bit vague. It is the user who runs video analytics queries or the user of uplink apps? I assume the former but not sure.},
this fragility arises because the cameras used for analytics typically connect to edge servers (where analytics is performed) via \emph{last-mile routers}, such as home gateways~\cite{sundaresan2011broadband, hoiland2018cake}, wireless APs~\cite{bhartia2017measurement, meng2022achieving, xu2020understanding}, and mobile edge gateways~\cite{hu2023edge, intel_upf_lowlatency, crugnola2025latency}. These routers are often the narrowest link along the path and, as such, create a bottleneck effect that dominates the end-to-end performance~\cite{sundaresan2011broadband, undaresan2013web, xu2021edge, imc2021cloudy, shao2022task, xu2022tutti}. 

Congestion at last-mile routers is extremely common\del{, as these}\rev{. These} routers must contend with not only the inherent access-link variability~\cite{sundaresan2016home, sharma2023wifi}, but also traffic spikes from competing data-dense cross traffic, \eg cloud backups~\cite{ramesh2022secured, gupta2021review}, AI model aggregation~\cite{qi2024model, deng2021fair}, and live video broadcasting~\cite{ayman2024design, jiang2021survey}. Measurements from a residential broadband video streaming trace~\cite{broadbandvideo2010} reveal that the last-mile edge routers can suffer from severe instability, with sudden ABW drops of more than 50\% occurring at a frequency of \del{2.7\%}\rev{2.68\%} on broadband access links and \del{4.3\%}\rev{4.35\%} on home wireless networks.

%\del{Indeed, our analysis of a public video streaming dataset~\cite{Oboe} \hx{Change} reveals that last-mile edge routers can experience \textbf{sudden ABW drops} of over 50\% with a frequency of 1.04\%} (\S\ref{subsec:motivation}). 
%\hx{do we have any idea of how many of bottleneck routers are also the last mile routers?}
%each lasting an average of over 9 seconds (\S\ref{subsec:motivation}). 
%\rn{does the 9 seconds matter? wouldn't things like GCC be able to deal with that?} 
%\yu{yes, you are right. It does not matter in our considered problem. I will remove it.}
This results in substantial queuing delays and packet drops at the last mile, which cascade through the path and lead to severe reductions in analytics accuracy (by up to 30\%), while also causing violations of service-level agreements (SLAs).~\cite{yan2022resource, xu2024adaptive}

% The effect is substantial queuing delays and packet drops at the last-mile, propagating through the path and ultimately, leading to severe reductions in analytics accuracy (by up to 30\%) and violations of service-level agreements (SLAs)~\cite{yan2022resource, xu2024adaptive}. 
%\rn{I reworked the para above; please make sure I didn't break something :)}

\begin{figure}
    \centering
    \includegraphics[width=0.95\linewidth]{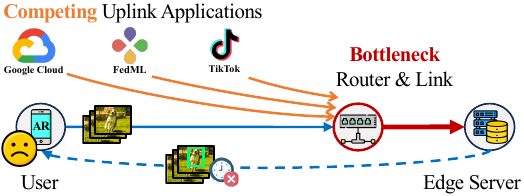}
    \vspace{0mm}
    \caption{Diverse competing uplink applications can cause a sudden drop in ABW at a last-mile edge router and link, leading to high latency and reduced accuracy in real-time video analytics on the edge.}
    \label{fig:Motivation-Background}
\end{figure}

Existing optimization techniques are incapable of preserving analytics accuracy in the presence of sudden ABW reductions at edge routers. Network-level techniques perform congestion control \cite{loreto2014real, rahaman2015survey} or proactively drop packets \cite{ryu2004advances, hollot2002analysis, lin1997dynamics, floyd1993random, patil2019towards, sharma2014controlling, liu2018adaptive, carlucci2018controlling, banchs2002distributed, hamadneh2012dynamic} at routers in a way that is agnostic to packet content (\ie frame content), and hence can drop important packets (\eg containing content that is critical to the analytics accuracy), leading to accuracy loss.
Retransmission techniques~\cite{liu2018adaptive, carlucci2018controlling} can be used to recover lost packets, but they introduce significant delays, disrupting timely processing in real-time applications. 
Moreover, application optimizations for video analytics are conducted either at the client (device) or at the (edge) server, in a way that is either agnostic to network conditions or exhibits lagged reactions to network fluctuations.
%(as analyzed in \S\ref{subsubsec:theoreticalAnalysis}).
Severe accuracy loss can still result from sudden ABW drops even if these optimizations are enabled (see \S\ref{subsec:motivation2}).

\MyPara{Insight.} The main question this paper asks is: can we design an in-network mechanism that drops packets at last-mile edge routers in a \emph{video-content-aware} manner when they encounter bandwidth reductions?  In other words, can we prioritize packets based on the impact of the content they contain on the analytics accuracy? We answer this question by discussing the design and implementation of \SYS, a system that can prioritize packets and judiciously drop packets directly at an edge router (1) based on their importance rankings and (2) in response to the severity of the ABW drop. 
While this idea appears straightforward, evaluating each packet's importance faces the following three challenges. 

The first challenge is \emph{how to make packets aware of frame content}. 
Enabling content-aware packet dropping requires a packetization technique that can (1) minimize the dependencies between packets and (2) make packets distinguishable in terms of their importance to the analytics.  Conventional techniques packetize the encoded stream of a video sequentially into packets, which may introduce a large number of dependencies\textemdash the content of a frame spans multiple consecutive packets, making it hard to recover the frame if one or multiple of these packets are dropped. The other problem is standard packetization are agnostic to frame content, making it difficult to determine their individual importance.

%For this challenge, we use a \emph{block-based} frame encoding algorithm~\cite{du2022accmpeg, shi2023adapyramid, guan2019pano, chen2024tilesr}, which divides each frame into blocks and each block undergoes independent I/P-frame encoding \cite{pourreza2021extending, turaga2001p} and network packetization. 
To address this challenge, we customize a \emph{block-based} frame encoding algorithm \cite{du2022accmpeg, shi2023adapyramid, guan2019pano, chen2024tilesr}, which divides each frame into multiple blocks. 
\rev{Each block is independently encoded with a standard lossy codec (H.264 in our implementation) as an I/P-frame  \cite{pourreza2021extending, turaga2001p} sequence; the encoded bitstream is then packetized directly into RTP packets without further transcoding.}
Our key observation is that, for a given camera feed, different blocks of a frame capture various areas of a location and often have varying levels of importance for the analytics (see Figure~\ref{fig:Motivation-2}). 
This allows us to treat these blocks as separate packets, making it easier to assess their individual impact on analytics accuracy. \tool embeds this block information into packets by using the RTP header extension mechanism~\cite{gao2019rtp}, enabling edge routers to perform content-aware dropping while remaining fully compliant with the RTP standard (ensuring valid transmission through existing networks)~\cite{frederick2003rtp, schulzrinne2003rfc3550,sarker2021rtp}. 

The second challenge is \emph{how to assess packet importance}. The importance of each packet represents its impact on the analytics accuracy and hence must be assessed in the context of the analytics. \SYS establishes a feedback loop that uses the accuracy change in the performed analytics to compute the importance of each block, which may vary over time  (\eg due to scene changes). % \rn{does the importance of a packet stay stable over time?} 
This accuracy change is evaluated with a metric referred to as \emph{relative accuracy impact} (RAI), which measures the accuracy improvement gained by retaining a block as compared to losing it. We establish our feedback loop by piggybacking on Real-Time Control Protocol (RTCP) \cite{sarker2021rtp, novotny2008large}, which complements RTP by providing critical feedback on metrics such as packet loss and jitter, facilitating real-time adjustments to improve user experience.
Among the various RTCP types, we utilize application-specific (APP) packets \cite{montagud2012enhanced}, allowing customized feedback tailored to specific application requirements. 
Specifically, we calculate the RAI for each block on the server and pass it as feedback using a carefully designed APP RTCP structure (see \S\ref{subsec:priorityFeedback}).
% \del{It is then written into the packets that encode the block for future use.} 
The feedback guides future packetization at the client.

The third challenge is \emph{how to discard packets on edge routers}. We design a phase-based approach that runs on a last-mile edge router to identify ABW drop and discard packets.
Our algorithm performs phase transitions based on network conditions, ensuring timely packet dropping.
% Our algorithm performs phase transitions based on the frame structure information to determine the optimal packet dropping timing.
It further adapts the dropping threshold (\ie an importance value such that packets whose importance is below this value will be dropped) to the ABW, effectively discarding unimportant packets by comparing each packet's importance with the threshold.

\MyPara{Compatibility and Deployability.} 
\SYS is engineered for drop-in deployability in contemporary edge routers. It confines changes to RTP/RTCP header fields, preserving wire compatibility with existing infrastructure. Per-packet analytics (\eg block structure and importance) are encoded as compact, standards-compliant RTP/RTCP header extensions that are shipped in plaintext (even with encrypted variants like SRTP), allowing transparent carriage over today’s edge networks and access at routers. Further, the importance-aware dropping algorithm is realizable at edge routers \emph{without per-flow state}: endpoints use analytics information to annotate packets with importance/feedback in the headers, and \SYS-enabled edge routers enforce discard policies by inspecting only these fields and performing simple checks. 
\del{Many modern edge routers often do maintain per-flow state for NAT, security, QoS, \etc For those routers that already support per-flow queues, \SYS 
performs fair packet dropping to ensure that the accuracy drops are similar for different video flows; otherwise, \SYS drops packets in a flow-agnostic manner to maximize the average accuracy across all flows. }
\rev{We discuss multi-flow video support in \S\ref{subsec:packetDiscarding}.}

% While generally usable on any router, \SYS targets packet filtering at modern last-mile edge routers to enable immediate deployment. The majority of last-mile edge routers (\eg home gateways and wireless APs) are either Linux-based or operate Linux-derived firmware (such as OpenWRT~\cite{Anurag,openwrt,PeterWeidenbach}), which permits execution of the eBPF programs~\cite{vieira2020fast,abc} leveraged by \SYS. Looking ahead, next-generation 5G/6G infrastructure is increasingly built on modern Linux platforms with full eBPF support~\cite{foukas2023taking}, making \SYS readily deployable on future last-mile devices.  The current implementation of \SYS is also compatible with any secure protocol such as SRTP where packet headers are not encrypted. Encrypting headers (\eg RFC 6904) would prevent \SYS from reading header information from packets; we leave the support for such protocols to future work. 

While generally applicable to any router, \SYS targets packet filtering at modern last-mile edge routers for immediate deployment. Most such routers, including home gateways and wireless APs, are Linux-based or use Linux-derived firmware such as OpenWRT~\cite{Anurag,openwrt,PeterWeidenbach}, enabling the eBPF programs~\cite{vieira2020fast,abc} used by \SYS. Emerging 5G/6G infrastructure is likewise increasingly built on Linux platforms with eBPF support~\cite{foukas2023taking}, making \SYS readily deployable on future last-mile devices. 
\SYS is also compatible with secure protocols such as SRTP~\cite{rfc3711} when packet headers remain unencrypted. For protocols such as RFC 6904~\cite{rfc6904}, where header extensions are encrypted separately, users can enable \SYS by sharing only the header-extension key with their last-mile edge router.

\MyPara{Results.}
We have evaluated \SYS on 2 real-world network bandwidth datasets and 4 real-world video datasets, including both surveillance and moving camera videos.
Our results show that \SYS outperforms other baselines (both network-level ones like GCC-only \cite{carlucci2016analysis} and CoDel~\cite{sharma2014controlling}, and application-level ones like EAAR \cite{liu2019edge}), 
%\rn{maybe list the baselines?}
achieving a reduction in the 99.5th percentile frame delay by 53.2\% and an improvement in accuracy by up to 27.1\%, with negligible overhead.

%\vspace{3pt}
%\noindent {\em This work does not raise any ethical issues.}

%% file: 3-motivation.tex
\mysection{Background and Motivation
\label{sec:motivation}}
%\yuRevise{In this section, we first introduce the drastic ABW degradation at the bottleneck router in edge networks. 
%Next, we show the limitations of strawman approaches. 
%\shanRevise{Next, we demonstrate why the strawman approaches fail to adapt to the drastic ABW degradation issue.}}
%In this section, we first introduce the background of real-time video analytics.
%Next, we analyze the issue of drastic available bandwidth degradation at bottleneck routers for a typical edge-based video analytics pipeline as illustrated in Figure \ref{fig:Motivation-0}.
%Finally, we discuss the opportunities with frame-aware packet discarding.
\vspace{.5em}
\mysubsection{Why Last-Mile Edge Routers?}

Last-mile edge routers are the access segment linking end-user devices (\eg home cameras, IoT devices, drones, smartphones, \etc) to the first Internet Service Provider (ISP)’s aggregation network, typically through home gateways~\cite{sundaresan2011broadband, hoiland2018cake}, cellular base stations~\cite{xu2022tutti}, and fixed broadband access equipment (DSLAMs, CMTSs, OLTs) \cite{sundaresan2011broadband, bajpai2017lastmile}.

Prior work identifies the last mile as the typical bottleneck and a dominant determinant of end-to-end performance. For example, Cloudy~\cite{imc2021cloudy} reports that the last-mile wireless link accounts for more than 50\% of the total end-to-end latency to the nearest cloud. Edge platform measurements~\cite{xu2021edge} further show that under 5G the first three hops contribute up to 98\% of RTT to the nearest edge and 82.2\% to the nearest cloud.
% shares 69.8\%, 89.7\%, 97.9\% of the end to end network delay to the nearest edge (5-12 hops), and 47.5\%, 74.8\%, 82.2\% to the nearest cloud (10-16 hops) under WiFi, LTE and 5G respectively. 
Web bottleneck studies show that last-mile latency strongly governs page loads: adding just 10ms in the last mile can inflate page load times by hundreds of milliseconds~\cite{undaresan2013web}. For edge video analytics, Tutti~\cite{xu2022tutti} identifies the radio access network, the first hop in 5G, as a persistent bottleneck. In edge deployments the effect is pronounced: sustained analytics streams face limited uplink capacity and cross-traffic, leading to queuing delays, bandwidth collapses, and degraded task accuracy~\cite{shao2022task,pi2024pib,xu2021edge}.

The effect is also evident in our trace analysis. Table \ref{tab:ABW_drop_frequency} shows the frequency of ABW degradation across real-world datasets \cite{Oboe, FCC18, Ghent, broadbandvideo2010} measured in last-mile networks. While the Oboe~\cite{Oboe} dataset measures the end-to-end available bandwidth for video streaming, prior work has shown that such variability is primarily driven by the last-mile routers in both residential broadband and mobile networks~\cite{sundaresan2011broadband, bajpai2017lastmile, imc2021cloudy}, making Oboe a reasonable approximation for last-mile bandwidth dynamics. 
The frequency of 50\% ABW degradation is 1.04\%, 2.16\%, 5.22\%, 2.68\%, and 4.35\% for Oboe \cite{Oboe}, FCC18 \cite{FCC18}, Ghent \cite{Ghent}, VSBroadband~\cite{broadbandvideo2010}, and VSWireless~\cite{broadbandvideo2010}, respectively.
Notably, VSWireless experiences an average of 13 severe ABW drops (over 50\%) within 5 minutes, underscoring the highly unstable nature of in-home wireless networks.
%Such ABW reductions can cause significant frame delays, harming analytics accuracy.

Beyond being the narrowest and most congestion-prone segment of the network~\cite{sundaresan2011broadband,bajpai2017lastmile,imc2021cloudy}, the last mile is becoming increasingly programmable: 
many home gateways run Linux/OpenWrt~\cite{openwrt} and expose eBPF/XDP traffic shaping~\cite{xu2021edge, xu2022tutti}. 
On the broadband side, operators are transitioning from closed DSLAMs/CMTS designs to virtualized BNGs~\cite{itu_vBNG2025,cisco_cnBNG} and open disaggregated access (\eg vBNGs~\cite{itu_vBNG2025}, DT Access 4.0~\cite{dt_access4}, ONF SEBA~\cite{onf_seba}). 
Wireless edge platforms such as OAI~\cite{oai} and srsRAN~\cite{srsran} expose similar programmability in software-defined RANs, while emerging vRAN systems like Atlas~\cite{atlas2023} extend this flexibility to 5G/6G deployments. 
SD-Fabric~\cite{onf_sdfabric,opennetworkingUsingProgrammable}, Janus~\cite{foukas2023taking}, and P4-UPF~\cite{macdavid2021p4upf,bose2021leveraging,hip4upf@atc24} demonstrate that the last-mile data plane can be made programmable by combining P4-programmable switches with Linux servers running eBPF/XDP, enabling software-defined control at the network bottleneck. As a result, last-mile routers serve as both the performance-critical link and a natural programmable control point for adaptive resource management in edge-based video analytics.

\begin{table}[t!]
\caption{Frequency of drastic ABW degradation across different real-world bandwidth datasets.}
\centering
\footnotesize
\begin{tabular}{|c|c|c|c|c|c|}
\hline
\textbf{\begin{tabular}[c]{@{}c@{}}ABW Drop Ratio\end{tabular}} & \textbf{30\%} & \textbf{40\%} & \textbf{50\%} & \textbf{60\%} & \textbf{70\%} \\ \hline
\textbf{Oboe}~\cite{Oboe}       & 4.35\%  & 2.14\%  & 1.04\%  & 0.45\%  & 0.14\%  \\ \hline
\textbf{FCC18}~\cite{FCC18}     & 4.70\%  & 3.14\%  & 2.16\%  & 1.61\%  & 1.18\%  \\ \hline
\textbf{Ghent}~\cite{Ghent}     & 14.83\% & 8.97\%  & 5.22\%  & 3.46\%  & 2.17\%  \\ \hline
\textbf{VSBroadband}~\cite{broadbandvideo2010} & 8.70\%  & 4.35\%  & 2.68\%  & 1.67\%  & 1.00\%  \\ \hline
\textbf{VSWireless}~\cite{broadbandvideo2010}  & 10.03\% & 6.69\%  & 4.35\%  & 2.34\%  & 1.00\%  \\ \hline
\end{tabular}
\vspace{2mm}
\label{tab:ABW_drop_frequency}
\vspace{1mm}
\end{table}

\mysubsection{Why In-Network Packet Dropping?\label{subsec:motivation2}}
To motivate \SYS's in-network packet dropping technique, we first ask if existing frame filtering mechanisms can already mitigate the impact of ABW degradation on real-time video analytics. To this end, two strawman approaches emerge.

\begin{figure}[!h]
    \vspace{0mm}
    \centering
    \subfigbottomskip=0mm 
    \subfigure[Decreasing Accuracy.]{
        \includegraphics[width=0.47\linewidth]{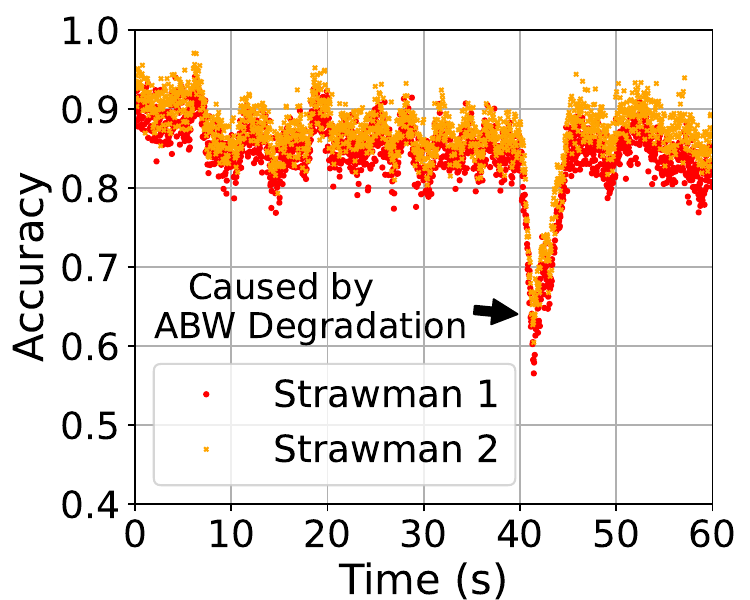}
        \label{fig:Motivation_filtering_methods_3}
    }\hfill
    \subfigure[Increasing Frame Delay.]{
        \includegraphics[width=0.47\linewidth]{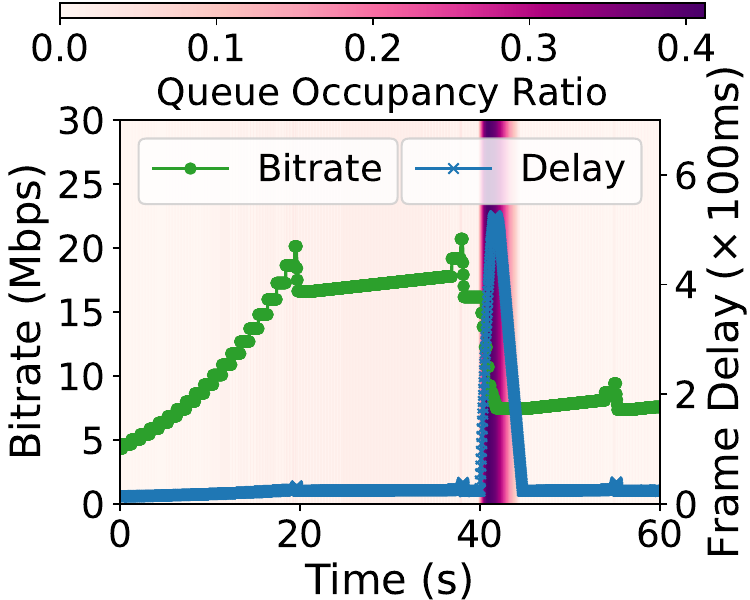}
        \label{fig:Motivation_filtering_methods_4}
    }
    \vspace{0mm}
    \caption{Strawman approaches for addressing the drastic ABW degradation in real-time video analytics on the edge.}
    \label{fig:Motivation_filtering_methods}
    \vspace{0mm}
\end{figure}

\MyPara{Strawman 1: Proactive Client-Side Filtering.}
% The first strawman approach is to proactively reduce the amount of data transmitted from the client side. Systems like Reducto~\cite{li2020reducto} and Glimpse~\cite{chen2015glimpse} use client-side computation to filter frames based on low-level pixel or edge features, \rev{} effectively reducing network consumption and improving bandwidth efficiency without severely affecting analytics accuracy. 
The first strawman approach reduces network usage by limiting data sent from the client. Reducto~\cite{li2020reducto} and Glimpse~\cite{chen2015glimpse} filter frames using low-level pixel or edge features, while \rev{offload shaping~\cite{hu2015offload, iyengar2023offload} and Abate~\cite{zhou2023bandwidth} similarly reduce the quality of unimportant regions before transmission.} These techniques effectively reduce network consumption and improve bandwidth efficiency without severely affecting analytics accuracy.

%However, these methods do not account for dynamic network conditions. When network congestion occurs, congestion control mechanisms further reduce the transmission bitrate, leading significant accuracy loss.

\MyPara{Strawman 2: Server-Guided Bitrate Reaction.}
The second strawman is to passively react to network condition changes by adjusting the video quality based on server-side feedback. Systems like DDS~\cite{du2022accmpeg, du2020server, shi2023adapyramid} leverage server-side video analytics and the network queue status to guide the client in using different encoding quality. 
%Specifically, different regions of a frame are encoded at varying quality levels based on their significance to analytics accuracy, and the client adjusts encoding quality in response to server-monitored queuing delays.

%However, this method suffers from response lag. When available bandwidth (ABW) drops suddenly, the client takes time to lower the bitrate, leading to extended queuing delays as high-bitrate packets continue to be sent. Since bitrate adjustments rely on feedback from the server—where bottleneck queuing delays are reported back to the client—the approach is inherently reactive. This delay worsens frame latency under abrupt ABW reductions, and also results in the accuracy drop of the real-time video analytics.

We implement and evaluate the two strawman approaches on the YOLOX \cite{ge2021yolox} and the Jacksonhole datasets \cite{jackson}.
Figure~\ref{fig:Motivation_filtering_methods_3} shows the accuracy changes of the two strawman approaches during the abrupt ABW reduction.
When ABW drops from 20 to 10Mbps at the 40th second, the average accuracy of the two approaches decreases from 0.87 to 0.60, and from 0.84 to 0.56, respectively. This degradation is primarily due to increased queuing delays and frame misalignment caused by network congestion.

Though Reducto-like (strawman 1) approaches can reduce the bandwidth consumption by filtering out less important data, they remain vulnerable to ABW reductions. Regardless of the bandwidth consumed by video analytics, competing applications (\eg cloud backups \cite{ramesh2022secured}, live video streaming \cite{ramesh2022secured}, \etc) continuously adjust the amounts of sent data to saturate the network~\cite{grover2022rate, balador2022survey, lorincz2021comprehensive}), causing ABW reductions to impact all applications. %As a result, video analytics face even lower bandwidth availability, leading to further accuracy degradation.
Server-side feedback approaches (strawman 2) introduce a significant lag in adapting to network changes. As shown in Figure~\ref{fig:Motivation-0}, queuing delays at the last-mile router must first propagate to the server, which then instructs the client to adjust its bitrate. This reactive process leads to delayed bitrate adaptation—when ABW drops suddenly, high-bitrate packets continue to be sent, increasing queuing delays and worsening congestion.

Figure~\ref{fig:Motivation_filtering_methods_4} offers a closer examination of the impact of the bandwidth reduction on the frame delay. When ABW abruptly drops from 20 Mbps to 10 Mbps at the 40th second, queue occupancy ratio peaks at 40\% and the frame delay reaches nearly 600 ms, causing analytics results to lag behind the current frame and significantly degrading accuracy.
Table~\ref{tab:frame_delay_accuracy} further quantifies this effect: as the frame delay increases from 50 ms to 800 ms, person detection accuracy degradation worsen from 4.70\% to 23.51\%, and vehicle detection accuracy degradation worsen from 2.73\% to 27.38\%. These findings highlight the urgent need for an in-network solution that can dynamically adjust the amount of data transmitted to preserve accuracy.

\begin{figure}
    \centering
    \includegraphics[width=0.99\linewidth]{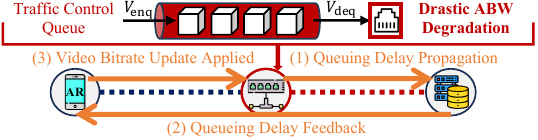}
    \vspace{0mm}
    \caption{Strawman approaches gradually adjust video bitrate to address ABW degradation at a last-mile router.}
    \label{fig:Motivation-0}
    \vspace{1em}
\end{figure}

\begin{table}[t!]
\caption{Object detection accuracy degradation for different levels of frame delay on Jacksonhole~\cite{jackson}  with YOLOX.}
\centering
\footnotesize
\begin{tabular}{|c|c|c|c|c|c|}
\hline
\textbf{\begin{tabular}[c]{@{}c@{}}Frame Delay\end{tabular}} & \textbf{50 ms} & \textbf{100 ms} & \textbf{200 ms} & \textbf{400 ms} & \textbf{800 ms} \\ \hline
\textbf{\begin{tabular}[c]{@{}c@{}}Person\end{tabular}}                                                           & 4.70\%$\downarrow$        & 5.90\%$\downarrow$       & 8.87\%$\downarrow$       & 14.53\%$\downarrow$       & 23.51\%$\downarrow$       \\ \hline
\textbf{\begin{tabular}[c]{@{}c@{}}Vehicle\end{tabular}}                                                          & 2.73\%$\downarrow$        & 3.91\%$\downarrow$       & 7.66\%$\downarrow$       & 16.26\%$\downarrow$       & 27.38\%$\downarrow$       \\ \hline
\end{tabular}
\vspace{2mm}
\vspace{0mm}
\label{tab:frame_delay_accuracy}
\end{table}

\begin{figure}[!t]
    \vspace{0mm}
    \centering
    %\subfigtopskip=2pt %设置子图与上面正文或别的内容的距离
    \subfigbottomskip=0mm %设置第二行子图与第一行子图的距离，即下面的头与上面的脚的距离
    \subfigcapskip=0mm %设置子图与子标题之间的距离
    % 第零行
    \subfigure[Detected vehicles in all regions.]{
		\includegraphics[width=0.46\linewidth]{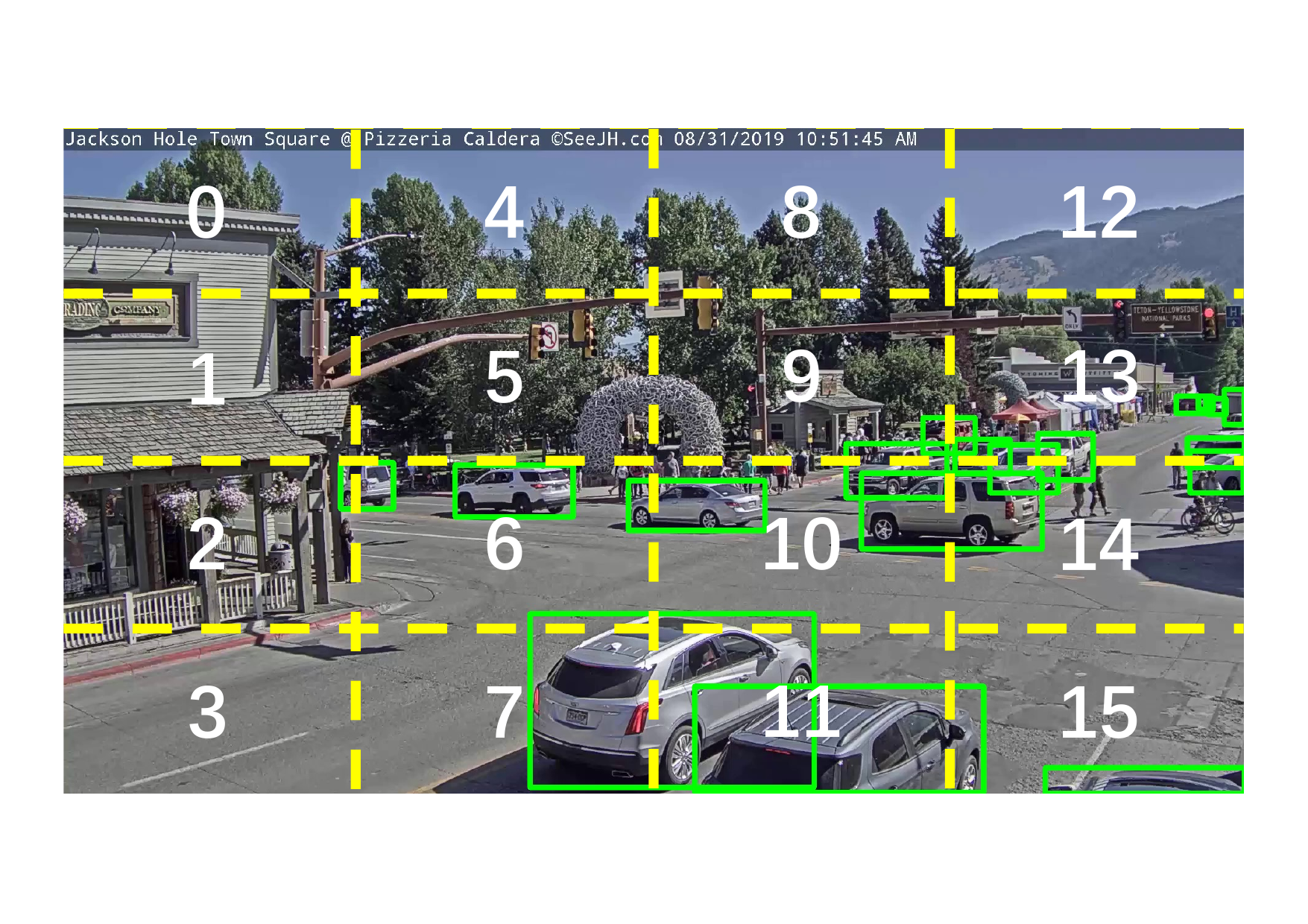}
		\label{fig:Motivation-2}
	}\hfill
    \subfigure[Average number of vehicles.]{
		\includegraphics[width=0.46\linewidth]{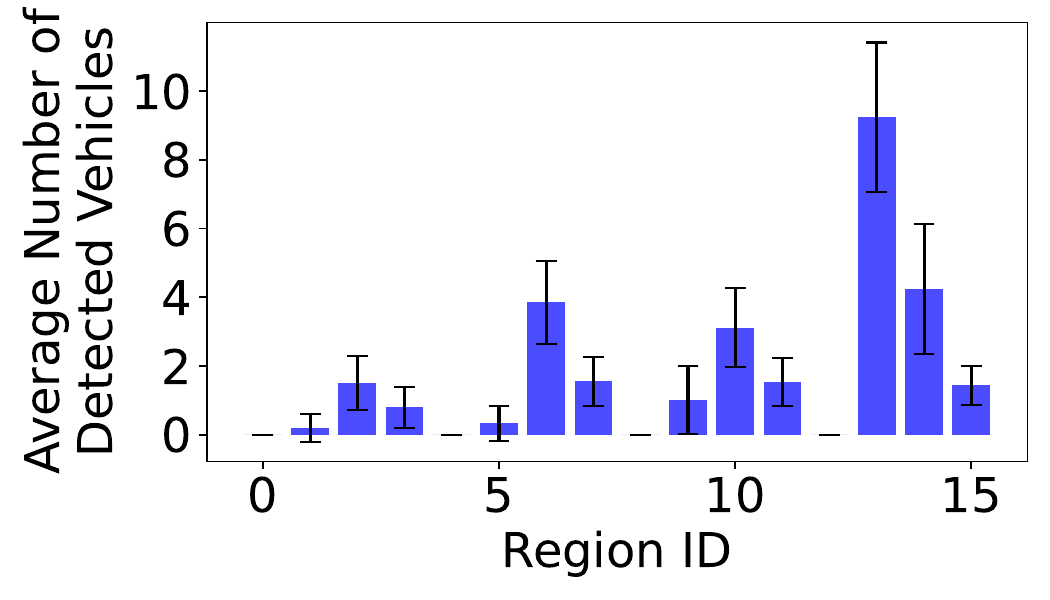}
		\label{fig:Motivation-3}
	}\\
    % 第一行
    \subfigure[Varying number of vehicles.]{
		\includegraphics[width=0.46\linewidth]{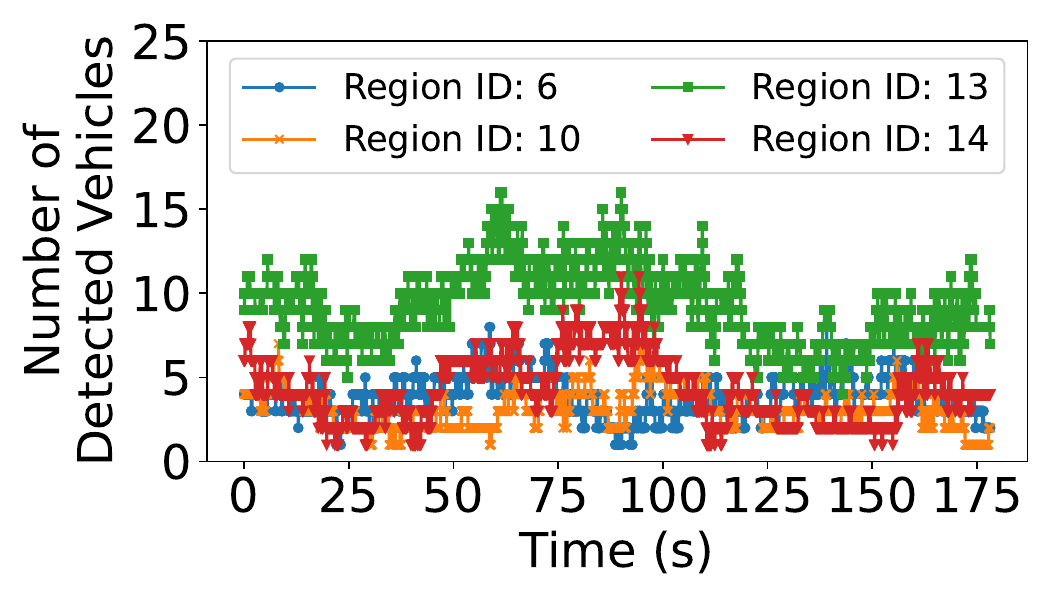}
		\label{fig:Motivation-4}
	}\hfill
    \subfigure[Numbers of vehicles and persons.]{
		\includegraphics[width=0.46\linewidth]{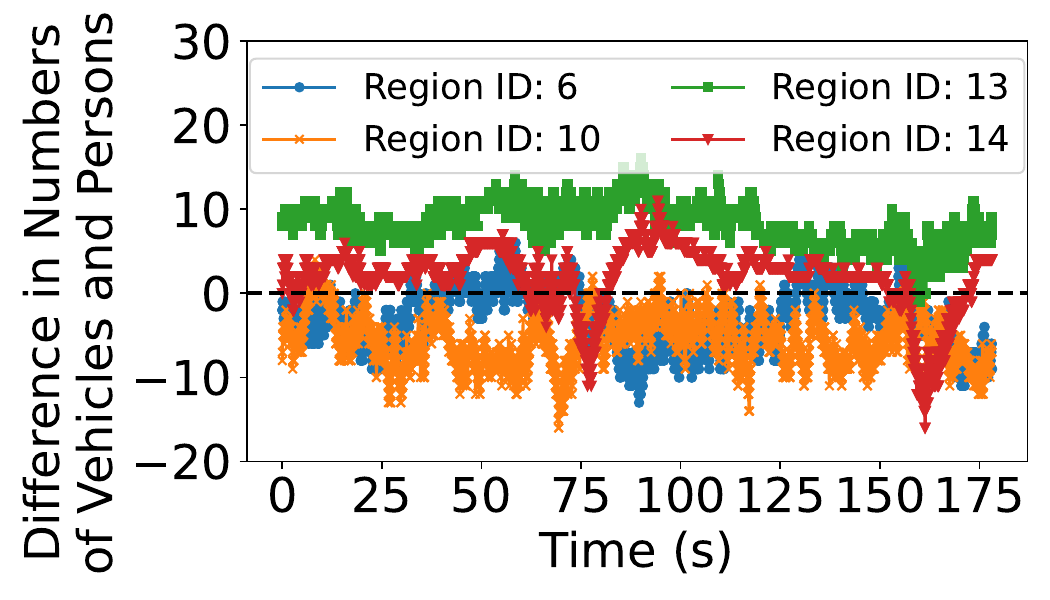}
		\label{fig:Motivation-4_2}
	}
    \vspace{0mm}
    \caption{Uneven distribution across frame-divided regions.}
    \label{fig:Motivation-2-3-4}
    \vspace{0mm}
\end{figure}

\MyPara{Opportunities.}
We observe an opportunity arising from the \emph{uneven distribution} across frame-divided regions.
Figure \ref{fig:Motivation-2} shows vehicle detection results from the Jacksonhole dataset \cite{jackson}, with the frame divided into 4$\times$4 regions. 
Figure \ref{fig:Motivation-3} reveals significant variations in average vehicle counts across regions, with regions 0, 4, 8, and 12 having no vehicles, and region 13 having the highest count. 
Figure \ref{fig:Motivation-4} shows temporal variations, with region 13 consistently having the most vehicles, while regions 6, 10, and 14 show fluctuating patterns. 
Figure \ref{fig:Motivation-4_2} illustrates the varying relative differences in detected vehicles and persons.  

In summary, Figure \ref{fig:Motivation-2-3-4} highlights that (1) regions contribute differently to video analytics quality, reflecting their varying importance, and (2) regional importance changes over time, requiring dynamic reassessment. These findings suggest opportunities to distinguish different blocks by assessing their individual importance to the analytics accuracy and drop those that contribute less when ABW degrades.

%% file: 4-design.tex
\mysection{Design
\label{sec:design}}
%\vspace{.25em}
%\mysubsection{Design Overview}

%\tool is a system designed to enhance real-time video analytics by prioritizing packet discarding based on each packet's impact on video analytics accuracy. 

\tool comprises three main components: block-aware encoding and packetization at the client side (\S\ref{subsec:encode-packet}), analytics-driven packet prioritization at the server side (\S\ref{subsec:priorityFeedback}), and adaptive packet discarding at last-mile routers (\S\ref{subsec:packetDiscarding}). Figure~\ref{fig:design-overview} illustrates \SYS's overall architecture.
%\hx{Rewrite this.}
%\shan{Done.}

\MyPara{Block-Aware Encoding and Packetization (Client Side).}
The video analytics client first encodes the real-time video frames with a \emph{block-based frame encoding} method (\S\ref{subsubsec:frame-encoding}). The method segments each frame into independent blocks, minimizing dependencies between them. Such segmentation is crucial as it ensures that frames can still be successfully decoded even if some packets are discarded. Moreover, it facilitates the clear distinction between blocks based on their significance to overall video analytics accuracy. The importance of each block is quantified by a \emph{cumulative size ratio (CSR)} (\S\ref{subsubsec:header-ext}) value, defined as the fraction of the frame’s size contributed by that block and all lower-priority blocks. The block priorities are derived from the analytics performed on the server side.

Blocks are then packetized using a \emph{block-aware RTP header extension} (\S\ref{subsubsec:header-ext}). The new RTP header extension is specifically designed to incorporate each block's metadata and its importance, ensuring that each block can be individually recognized and properly managed during network congestion. 

While block-based encoding effectively supports importance-based packet drops based on content, its encoding efficiency is 17.2\% lower (\ie sending more data) than traditional frame-based encoding across different bitrates. 
However, for real-time analytics where resilience to frame loss is critical \cite{mackay1997near, badr2017fec, rudow2023tambur, li2023reparo, cheng2024grace, kang2022error, sankisa2018video}, the added flexibility for resilience and targeted adaptation is far more important than mild increases in the amount of data transferred.

\begin{figure}[t]
    \centering
    \includegraphics[width=0.8\linewidth]{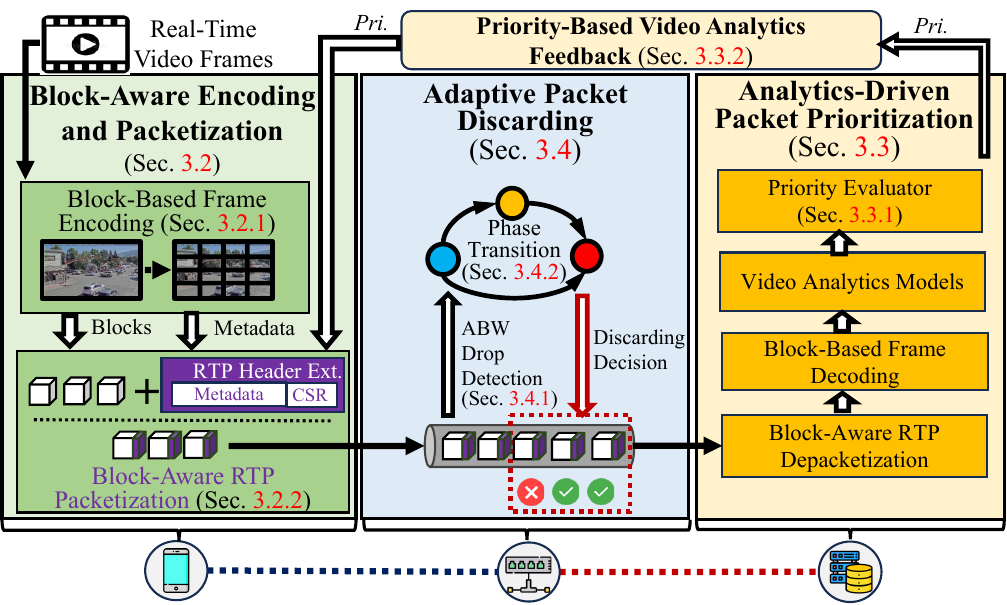}
    \caption{Overview and workflow of \SYS.}
    \vspace{0mm}
    \label{fig:design-overview}
\end{figure}

\MyPara{Analytics-Driven Packet Prioritization (Edge-Server Side).}
Upon receiving packets, the analytics server performs depacketization and decodes packets into blocks. Thanks to the resilient block-based encoding and our packet discarding strategy, even with some of the low-importance blocks dropped, the integrity and accuracy of video analytics are minimally impacted. 
After analytics, a priority evaluator re-accesses each block's priority using the RAI metric (\S\ref{subsubsec:accuracy_driven}). Any changes in the priority values are promptly communicated back to the client through an \emph{RTCP-based feedback loop} (\S\ref{subsubsec:rtcp_based}), ensuring that the client's packetization process remains aligned with the most current analytics-driven priorities.

\MyPara{Adaptive Packet Discarding (Edge Routers).} 
To enable adaptive packet discarding, edge router continuously tracks enqueue and dequeue packet rates to detect any reduction in available bandwidth (\S\ref{subsubsec:network-metric}). Upon detecting a reduction, our algorithm triggers a phase transition to proactively discard packets (\S\ref{subsubsec:state}). During this phase, a drop threshold is calculated adaptively based on current network conditions. The packet discarding manager then eliminates all packets whose \emph{cumulative size ratio (CSR)} (\S\ref{subsubsec:header-ext}) falls below the specified threshold, alleviating network congestion while preserving the transmission of high-priority blocks critical to analytics accuracy. This filtering only inspects the CSR information encoded in the packet headers, without needing to maintain any per-flow state on edge routers.

% We present an overview of our proposed system \SYS in Figure \ref{fig:Design-0}. 
% Corresponding to the three challenges mentioned earlier, it consists of three key components:
% \vspace{-0.5em}
% \begin{itemize}[itemsep=0.5em, leftmargin=1em]
% \item To minimize the negative impact of retransmission on the overall frame latency, we carefully devise a block-aware RTP header extension (Section \ref{subsec:encode-packet}) at the client.
% It effectively reduce dependencies between video packets by embedding crucial block-related information into them.\vspace{-0.75em}
% \item To simultaneously account for both frame structure and network conditions, we propose a state-transition-based packet discarding strategy (Section \ref{subsec:packetDiscarding}) at the bottleneck router, enabling efficient and precise packet discarding decisions.\vspace{-0.75em}
% \item Leveraging an application-specific metric of relative accuracy impact (RAI) for dynamically evaluating packet priorities, we propose a priority-based video analytics feedback mechanism (Section \ref{subsec:priorityFeedback}) on the server to support \SYS’s proactive packet discarding.\vspace{-0.5em}
% \end{itemize}
% \shan{I feel a little hard to understand the relationships between each module, and how each module contributes to our goals exactly.}

% \begin{figure}[t!]
%     \centering
%     \includegraphics[width=1\linewidth]{figure/Design-0.pdf}
%     \vspace{-4mm}
%     \caption{Overview of our proposed system \SYS.}
%     \vspace{0mm}
%     \label{fig:Design-0}
% \end{figure}

\mysubsection{Block-aware Encoding and Packetizing\label{subsec:encode-packet}}
\mysubsubsection{Block-Based Frame Encoding\label{subsubsec:frame-encoding}}
%$\newline$
We aim to discard packets at edge routers while ensuring the retained packets remain decodable and analyzable at the server without retransmission. 
The strong correlation \cite{liang2008analysis, cheng2024grace} among packets encoded by traditional methods \cite{vp8, vp9, H264} poses a challenge.
To address this, it is essential to reduce inter-packet dependencies without hurting encoding quality.

To balance encoding efficiency and adaptable packet dependencies, 
% To this end, 
we adopt a block-based frame encoding technique \cite{du2022accmpeg, shi2023adapyramid, guan2019pano, chen2024tilesr}. % to balance encoding quality and adaptable packet dependencies.
At the client, each frame is divided into multiple blocks (\eg N$\times$N \cite{du2022accmpeg, shi2023adapyramid, guan2019pano, chen2024tilesr}), which are independently encoded and packetized for transmission, as shown in Figure \ref{fig:Design-2}. Across frames, \tool follows the conventional I/P-frame encoding, ensuring intra-frame block independence with minimal impact on encoding efficiency.

The value of $N$ can be tuned to balance the trade-off between the impact of discarding packets and encoding efficiency\textemdash A larger $N$ (fine-grained) allows \tool to better preserve accuracy when dropping packets but increases encoding redundancies (\ie more data transmitted). Our results show that N=4 strikes a good balance; a detailed evaluation of the impact of $N$ can be found in Figure \ref{fig:Eva_parameter_bsize} in \S\ref{sec:evaluation}. 
Different from encoding the entire frame, the target bitrate $B_{target}$ needs to be allocated across multiple blocks:\vspace{0mm}
\begin{equation}
    \sum\nolimits_{blk \in \bm{b}}Alloc(blk)\leq B_{target}.\vspace{0mm}
\end{equation}
$Alloc$ represents the specific bitrate allocation method across multiple blocks $\bm{b}$, which can be adaptively and flexibly decided (e.g., \cite{du2022accmpeg, shi2023adapyramid, guan2019pano, chen2024tilesr}).
Each block $blk$ is encoded at the target bitrate $Alloc(blk)$ and then packetized into RTP packets.
This method preserves the inherent correlation among packets within each block, thus maintaining high video encoding efficiency. 
Additionally, the independence of packets across different blocks ensures that discarding packets from one block does not affect the decoding, depacketization, and further analysis of others.
This enables \SYS to flexibly discard packets at the last-mile bottleneck router.

%\shan{todo: remove and merge to next section}

\begin{figure}[t!]
    \centering
    \includegraphics[width=1\linewidth]{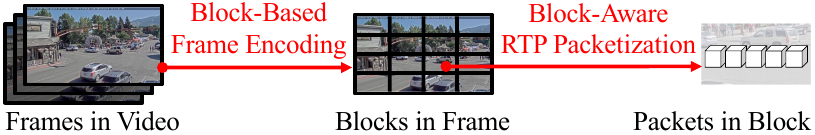}
    \caption{Frames are divided into blocks, encoded, packetized with block-specific metadata, and transmitted.}
    \vspace{0mm}
    \label{fig:Design-2}
\end{figure}

\mysubsubsection{Block-Aware RTP Header Extension\label{subsubsec:header-ext}}
%$\newline$
%\MyPara{Limitations of existing RTP header.}
%The existing RTP packet header structure cannot directly support the aforementioned block-based video packet decoupling, which is essential for \SYS's proactive block discarding.
%As illustrated in Figure \ref{fig:Design-1} (upper part), the RTP header includes the RTP version, payload type, sequence number, timestamp, and other elements.
%This information, along with the RTCP control protocol \cite{sarker2021rtp, novotny2008large}, ensures the reliable transmission of standard video media.
%However, the existing RTP packet header structure cannot directly support the aforementioned block-based video packet decoupling, which is essential for \SYS's proactive block discarding.
%This motivates us to further enhance the capabilities of the RTP header by designing the RTP header extension.
%Figure \ref{fig:Design-1} (upper part) illustrates the RTP header \cite{frederick2003rtp}, which includes the payload type, sequence number, and other elements. 
To enable frame-aware packet discarding at the last-mile router, it is essential that packets contain information that can assess their importance for discarding. 
Additionally, these packets must include metadata that allows the receiver to reassemble and analyze frames, even if some packets are discarded. A straightforward solution is to reuse the differentiated services (DS) field in the IPv4/IPv6 header \cite{nichols1998definition} to encode the block importance. 
%However, this header field only has 6 bits and is insufficient for supporting block-based video packet decoupling \hx{more specific?}. 
However, this header field only has 6 bits, limiting its ability to store granular block-level metadata (described later) needed for block-based frame packetization.
To solve this problem, we shift our focus to RTP, which supports a header extension mechanism, allowing custom information (such as application-specific data) to be added to packets.
This mechanism does not impact the transmission of video packets over the existing network and remains fully backward compatible \cite{sarker2021rtp}.
By utilizing the RTP header extension \cite{frederick2003rtp, schulzrinne2003rfc3550, gao2019rtp}, we carefully design our extension structure  (illustrated by the green part in Figure \ref{fig:Design-1}) with minimal extension length. This minimal length limits the number of blocks N to be $\leq$ 4. Although our empirical results (Figure~\ref{fig:Eva_parameter_bsize}) show that the 4$\times$4 encoding can already provide excellent accuracy preservation, users can opt for a larger N by extending the header length if needed.
%\rn{make sure to elevate this design choice to the intro!
%\yu{OK.}

Next, we present three important goals, and discuss how our design achieves them.

\MyPara{Frame reconstruction with block-to-frame metadata.}
    %The existing RTP header structure only considers basic information at the packet level and does not include relevant information for each block within a video frame.
The absence of block-related information makes it hard for the video analytics server to properly decode and reconstruct the original frame from the received packets for analysis.

We devise the indicators for block start (\textbf{S}) and end (\textbf{E}), which are embedded in each block’s packets at the client and are each allocated 1 bit. 
They signify whether the current packet marks the beginning or end of a block, thus facilitating the receiver in performing depacketization and decoding once a complete block is received.
Additionally, the server reconstructs frames by organizing each block according to its positional coordinates (\textbf{X}, \textbf{Y}), which are embedded in the packets at the client.
Specifically, each block is restored to its specific location ($x$, $y$) at the server as $x = {\textbf{X}} W/{|\{\textbf{X}\}|}$ and $y = {\textbf{Y}} H/{|\{\textbf{Y}\}|}$,
%\vspace{0mm}
%\begin{equation}
%    x = {\textbf{X}} W/{|\{\textbf{X}\}|},\quad y = {\textbf{Y}} H/{|\{\textbf{Y}\}|},\vspace{0mm}
%\end{equation}
where $W$ and $H$ represent the original frame width and height, respectively. 
Thus, $W/{|\{\textbf{X}\}|}$ and $H/{|\{\textbf{Y}\}|}$ denote the block width and height accordingly.
The above values of width and height are decided when the client and server establish their initial connection, and are used thereafter.
With each frame uniformly divided into 4$\times$4 blocks, 4 bits are allocated to represent each block's position.

\begin{figure}[t!]
    \centering
    \includegraphics[width=0.9\linewidth]{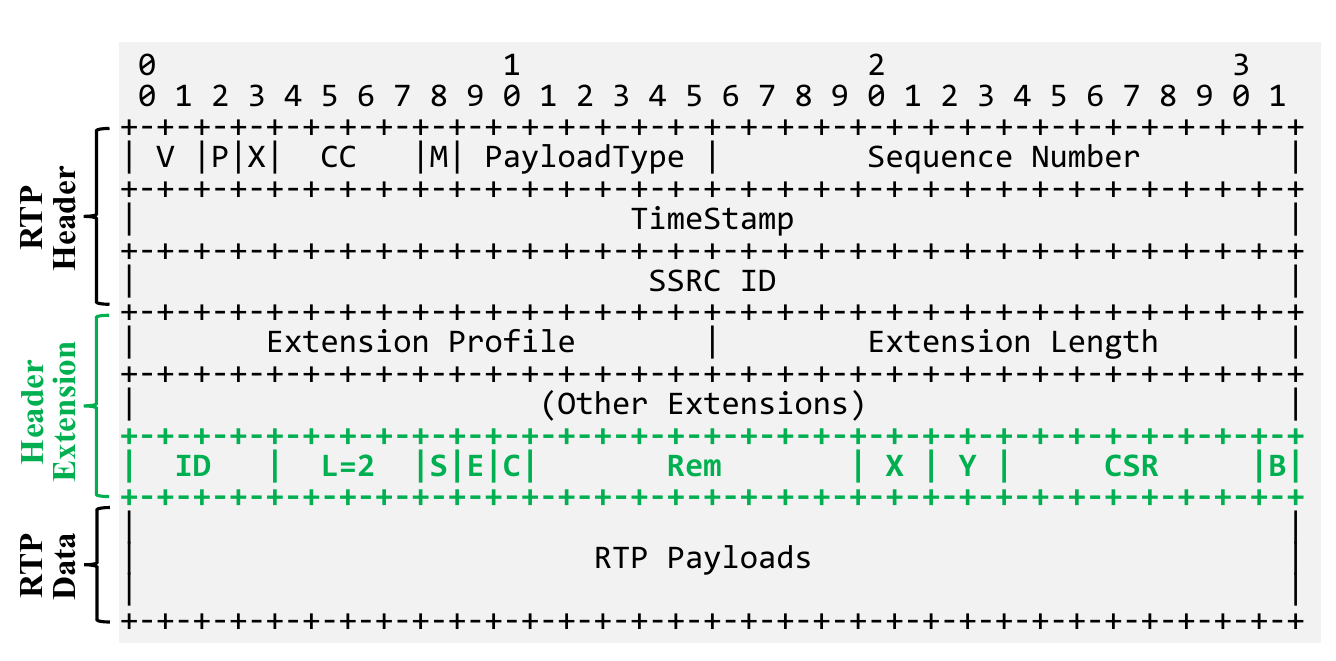}
    \vspace{2mm}
    \caption{The RTP header structure with block-aware RTP header extension for proactive packet discarding. }
    \vspace{0mm}
    \label{fig:Design-1}
\end{figure}

\MyPara{Block priority-based packet discarding.}
When a drastic available bandwidth degradation is detected at an edge router, \SYS should selectively discard some blocks.
However, the existing RTP header structure lacks information about the importance of each block, making it hard for \SYS to make decisions on block discarding.
%\del{Thus, the importance of each block should be stored in the header extension for \SYS to make decisions on block discarding.}
To solve this problem, we derive each block's priority $\mathit{pri}$ from the historical analysis results provided by the video analytics server as a proxy of its importance. Following the uniform division of each frame into 4$\times$4 blocks, we assign each block a unique priority value ranging from 0 to 15, with a larger value indicating a higher priority.

In an edge router, when ABW degradation is detected, \SYS must decide on a priority threshold below which packets should be discarded. To this end, we introduce \emph{cumulative size ratio (CSR)}, a metric that quantifies the fraction of the frame data that would be eliminated by discarding all blocks whose priority values are $\leq$ a given level $\mathit{pri}$: %\del{(\ie, the higher a priority value, the less important a block)}: %
\begin{equation}
CSR(pri) = 
\frac{\sum_{0 \leq q \leq pri} S(q)}{\sum_{0 \leq q \leq 15} S(q)} .\vspace{-1mm}
\end{equation}
where $S(q)$ denotes the total size of all packets at priority $q$. 
% \begin{equation}
% \label{eq:csr}
% CSR(pri) \;=\;
% \frac{\sum_{p \in \mathcal{F},\;\mathrm{Pri}(p) \ge pri} \mathrm{size}(p)}
%      {\sum_{p \in \mathcal{F}} \mathrm{size}(p)} ,
% \end{equation}
% where $\mathcal{F}$ is the set of packets in the frame, $\mathrm{Pri}(p)$ is the priority of packet $p$, and $\mathrm{size}(p)$ its size. The numerator sums the sizes of packets with $\mathrm{Pri}(p)\!\ge\! pri$, and the denominator is the total frame size.
Intuitively, $CSR(pri)$ captures the potential bandwidth savings if blocks at priority $pri$ and lower are dropped.
%A higher $CSR$ value corresponds to more important packets, since retaining them preserves a larger fraction of the frame.

To make CSR available to the router, each packet is encoded at the client with the CSR value corresponding to its block priority in the RTP header extension. We reserve 7 bits for CSR, providing 128 discrete levels over the range [0,1]. Each frame's blocks are then encoded, packetized, and transmitted in descending order of their CSR values, ensuring that more important blocks arrive first.
During the ABW degradation at an edge router, \SYS simply compares the embedded $\mathit{CSR}$ in each packet against the observed ABW deficit ratio (computed at the router). Packets with $\mathit{CSR}$ lower than the deficit ratio are proactively discarded. The detailed packet discarding process is described in \S\ref{subsec:packetDiscarding}.

%\del{
%Consequently, we reserve 4 bits in the block-aware RTP header extension for block priorities.
%At the client, each frame's blocks are encoded, packetized, and transmitted over the network in descending order of priority.
%To improve real-time video analytics performance, \SYS utilizes \textbf{Pri} to proactively discard packets at the last-mile router when the priority of packet $pkt$ satisfies $pkt.\textnormal{\textbf{Pri}} \geq Dis$，
%where $Dis$ is the packet discarding threshold. The detailed process for packet discarding is provided in \S\ref{subsec:packetDiscarding}.}

\MyPara{Avoiding retransmissions for discarded blocks.}
% \shan{Todo: move to packet discarding section}
From the perspective of the server, \SYS's proactive packet discarding at an edge router results in packet loss. 
Without additional information stored in the RTP header extension, the server finds it difficult to distinguish between packet loss caused by \SYS and other unexpected cases (e.g., AQM mechanisms \cite{ryu2004advances, hollot2002analysis}, channel interferences \cite{kihero2021wireless, villegas2007effect}). 
Consequently, the server faces challenges in deciding whether to request retransmission and how to request it properly.

Specifically, when the server receives a video packet $pkt$ with a sequence number $pkt.\textnormal{\textbf{N}}$, it updates the sequence number of the next expected packet to $pkt.\textnormal{\textbf{N}} + 1$. This is guaranteed because the RTP stack maintains a reordering buffer and delivers packets to video analytics server in the order of sequence number.
If the sequence number of the actual packet received exceeds $pkt.\textnormal{\textbf{N}} + 1$, the video analytics server interprets this as packet loss in the network, triggering a retransmission request. 
To prevent unnecessary retransmissions under \SYS's proactive discarding, we introduce a one-bit boundary flag $B$. When $B$ is set for a packet, it marks the proactive dropping boundary: all subsequent packets in the same frame have been discarded. Upon receiving this boundary packet with sequence number $pkt.\textnormal{\textbf{N}}$, the server adjusts the next expected sequence number to $pkt.\textnormal{\textbf{N}} + pkt.\textnormal{\textbf{Rem}}$, where $pkt.\textnormal{\textbf{Rem}}$ is embedded by the client in the RTP header to indicate the number of remaining packets in the current frame. 
%\hx{Explain the reason why packets can be received in the same order as they were sent}

%\del{\textbf{Dis} and \textbf{Rem} to represent the discarded block ranges.
%\textbf{Dis} embedded in the packets at a last-mile router indicates the starting priority of the discarded blocks. 
%\textbf{Rem} embedded in the packets at the client represents the number of packets remaining in the current frame from this packet onward.
 %   Concretely, when the currently received packet $pkt$ satisfies: \vspace{-2mm}
%\begin{equation} 
%pkt.\textnormal{\textbf{Pri}} == pkt.\textnormal{\textbf{Dis}} - %1\quad\textnormal{and}\quad pkt.\textnormal{\textbf{E}} == 1, \vspace{-1mm}
%\end{equation} 
%the next expected packet's sequence number is updated to:\vspace{-1mm}
%\begin{equation} 
%pkt.\textnormal{\textbf{N}} + pkt.\textnormal{\textbf{Rem}}. \vspace{-2mm}
%\end{equation}
%Notably, the packets with sequence numbers between $pkt.\textnormal{\textbf{N}}$ %and $pkt.\textnormal{\textbf{N}} + pkt.\textnormal{\textbf{Rem}}$ are discarded %by \SYS at the router.
%Following \textbf{Pri}, \textbf{Dis} is allocated 4 bits.}

As shown in \apx{Appendix \ref{appendix:estimated_number_of_RTP_packets}}{the extended version~\cite{lizardExtended}}, the estimated number of packets in a frame for common video bitrates and frame rates is less than 500, motivating us to allocate 9 ($>\log_2 500$) bits for \textbf{Rem}.
Additionally, \textbf{C}, a 1-bit field, can be marked by \SYS at the edge router when a sudden and sharp drop in available bandwidth is detected, triggering proactive packet discarding.

%In summary, the structure of our designed block-aware RTP header extension is illustrated in Figure \ref{fig:Design-1} (middle part), with additional details provided in Appendix \ref{appendix:summary_for_rtp_extension}.
 
%also summarizes the embedder and parser for each field with bit allocation in block-aware RTP header extension.
%Corresponding to the limitations \textit{\textbf{L$_1$}}, \textit{\textbf{L$_2$}} and \textit{\textbf{L$_3$}} that need to be overcome,
%the fields in the block-aware RTP header extension are primarily divided into three categories: 

%\begin{algorithm}[t]
%    \small
    %\caption{RAI-Driven Block Priority Feedback}
    %\label{Alg:Design-2}
    %\KwIn{ 
      % $BlockRAI \leftarrow \{\}$;
     %  $BlockPriLast \leftarrow \{\}$
    %}
    %\While {\textnormal{Enqueue new frame $frm$}}{
      %  $result \leftarrow VideoAnalytics(frm)$\;
     %   \tcp{Block Prioritization}
        %\textcolor{blue}{
    %    \textcolor{black}{
   %     \For {\textnormal{Each $blk$ in frame $frm$}}{
  %          $<block, RAI> \leftarrow CalculateRAI(blk, result)$\;
 %           $BlockRAI.Add(<block, RAI>)$\;
 %       }
 %       $BlockPri \leftarrow Sort(BlockRAI)$\;
 %       }

%        \tcp{Priority Feedback}
        %\textcolor{orange}{
%        \textcolor{black}{
%        \If {\textnormal{$BlockPri != BlockPriLast$ and $FeedbackInterval() > \lambda$}}{
%            $SendFeedback(BlockPri)$\;
%            $BlockPriLast \leftarrow BlockPri$\;
%        }
%        }
%        $ReturnResult(result)$\;
%    }
%\end{algorithm}

\begin{figure}[!t]
        \vspace{-4mm}
	\centering
        \subfigcapskip=0mm %设置子图与子标题之间的距离
	\subfigure[Arbitrarily selected blocks.]{
		\includegraphics[width=0.43\linewidth]{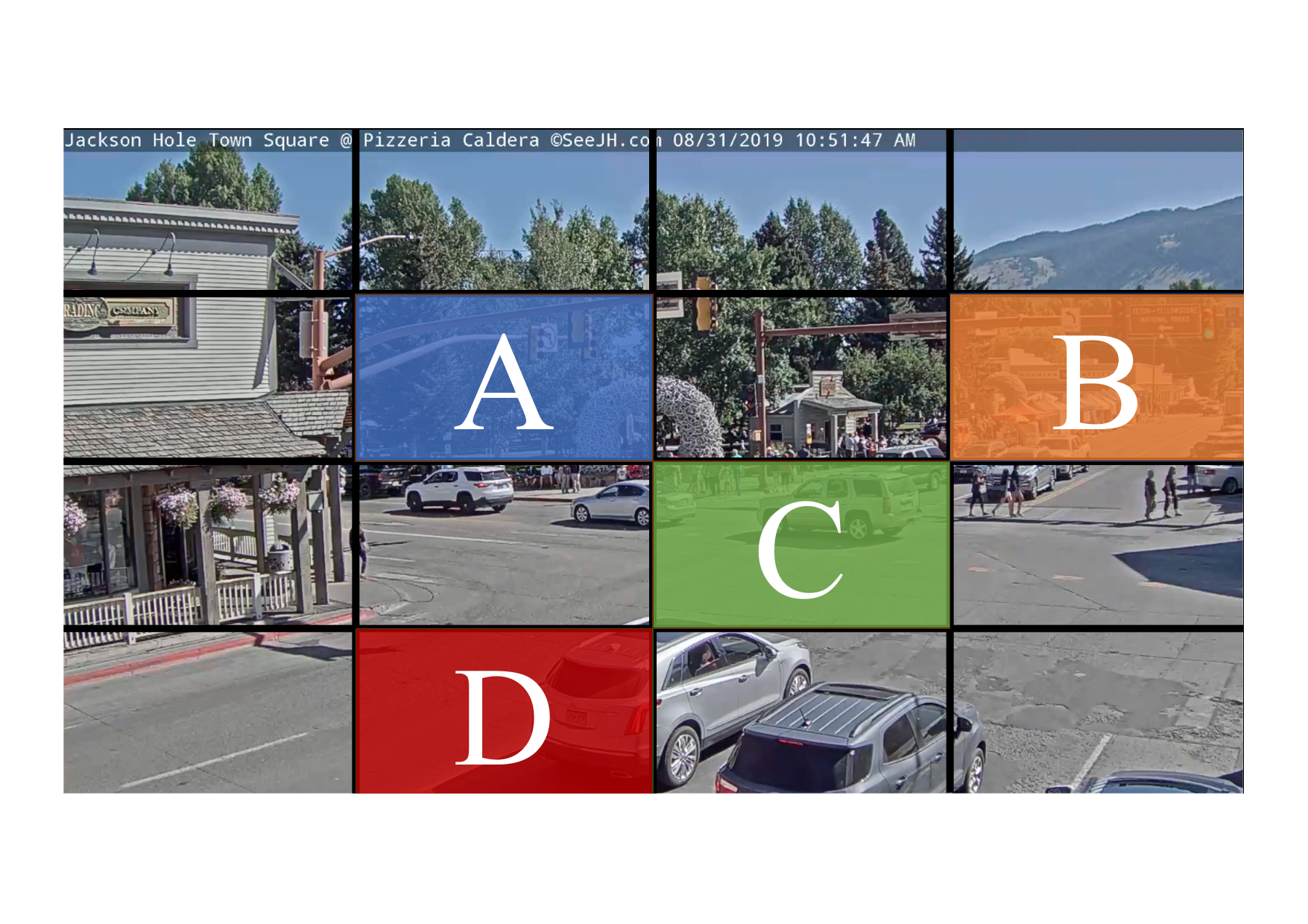}
            \label{fig:Design-9}
	}\hfill
	\subfigure[Varying block-based RAI.]{
		\includegraphics[width=0.43\linewidth]{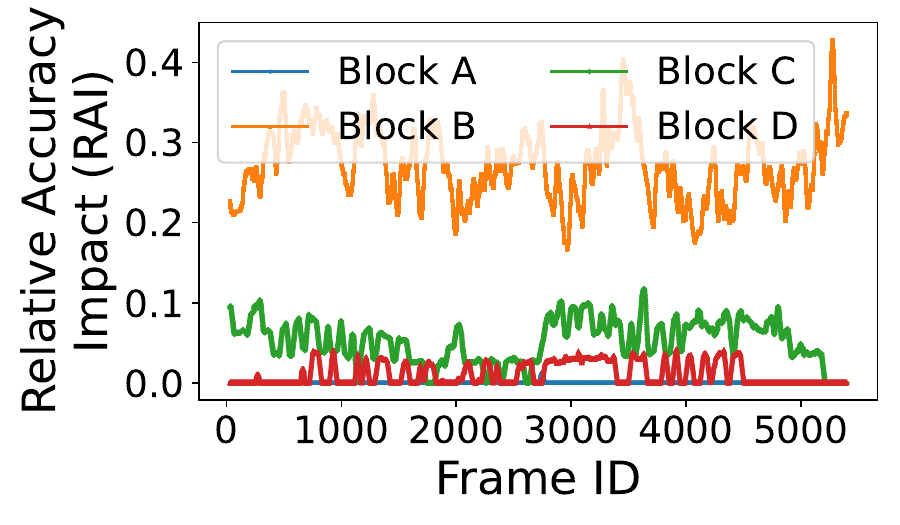}
            \label{fig:Design-10}
	}
	\caption{Varying RAIs for arbitrarily selected blocks.}
    \label{fig:Design-9_10}
\end{figure}

\mysubsection{Analytics-Driven Packet Prioritization
\label{subsec:priorityFeedback}}
%In the previous subsections, it was noted that the client embeds the importance of each block in a block-based RTP extension header before transmitting the RTP packets to the network, with the importance value sent by the server as feedback. To facilitate this, 
\SYS calculates the priority of each block on the server side, as the video analytics server has access to the complete analysis results. Subsequently, the block priority information is sent back to the client, which embeds it in the RTP extension headers before transmitting packets to the network. However, this approach raises two critical issues that must be addressed:
(1) How does \SYS define and calculate the priority of each block on the video analytics server? 
and (2) How does \SYS send the block priority information, computed on the video analytics server, back to the client?

%\vspace{-0.5em}
%\begin{itemize}[itemsep=0.5em, leftmargin=1.5em]
%    \item[1.] How does \SYS define and calculate the priority of each block on the video analytics server?
%    \vspace{-0.75em}
%    \item[2.] How does \SYS send the block priority information, computed on the video analytics server, back to the client?\vspace{-0.5em}
%\end{itemize} 

Before incorporating block priority feedback, the video analytics server processes each new incoming frame by feeding it into a deep neural network model (\eg YOLOX~\cite{ge2021yolox}) for video analytics. 
The analysis results for the current frame are then returned to the client.
\del{Based on this, we further propose an accuracy-driven block prioritization method and then devise an efficient RTCP packet structure for block priority feedback to address the two issues mentioned above.}

\begin{figure}[t!]
    \centering
    \vspace{2mm}
    \includegraphics[width=0.95\linewidth]{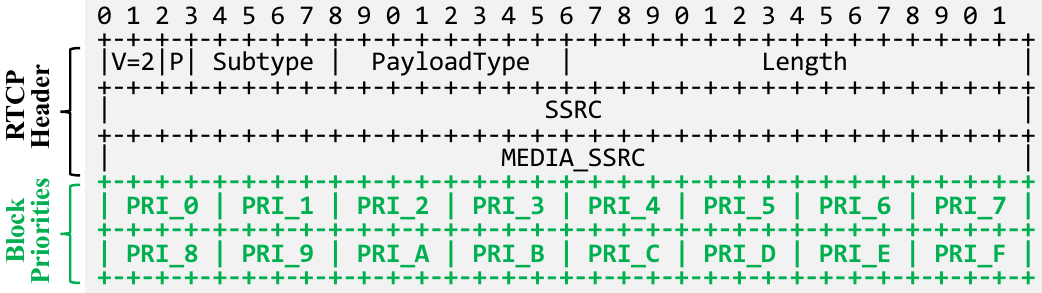}
    \caption{RTCP with accuracy-driven block priorities.}
    \vspace{0mm}
    \label{fig:Design-8}
\end{figure}

\mysubsubsection{Accuracy-Driven Block Prioritization\label{subsubsec:accuracy_driven}}
%$\newline$
To define the priority of each block, we consider its contribution to the overall accuracy of video analytics for the entire frame. 
To quantify this contribution, we propose a metric of block-based relative accuracy impact (RAI).
It reflects the improvement in accuracy gained by retaining a block compared to removing it within the current frame.
We represent the relative accuracy impact $\Delta A_{blk}$ for each block $blk$ as:\vspace{-2mm}
\begin{equation}
    \label{Equ:RelativeAccuracyImpact}
    \Delta A_{blk} = 
    ({A_{\bm{b}} - A_{\bm{b} \setminus blk}})/{A_{\bm{b}}},\vspace{-2mm}
\end{equation}
where $A_{\bm{b}}$ denotes the video analytics accuracy of the current frame with all blocks, and $A_{\bm{b} \setminus blk}$ denotes the accuracy after removing block $blk$. \SYS uses RAI to estimate each block's recent importance. For example, when RAI is defined by the number of detected vehicles, the parking-area block B in Figure~\ref{fig:Design-9_10} has the highest RAI, the landscape block A the lowest, while the two intersection blocks alternate in priority as vehicles move through them. \rev{RAI naturally extends to other tasks by replacing $A$ with task-specific metrics such as mIoU for segmentation, OKS for pose estimation, or rank-$k$ matching for re-identification, without changing the header extension or router logic. For tasks with weak spatial locality, RAI becomes flatter and \SYS gradually approaches content-agnostic dropping.}

% Upon the arrival and analysis of each new frame, the relative accuracy impact of each block in the current frame is calculated using Equation (\ref{Equ:RelativeAccuracyImpact}). 
% The RAI results for all blocks are then added to $BlockRAI$ and sorted into $BlockPri$ for priority feedback.

% \MyPara{RAI examples.}
% \SYS can leverage RAI to evaluate each block's priority over the recent period.
% For example, using the number of detected vehicles (which is customizable based on the specific task) as a metric for RAI, we can easily see that \textit{Block B} in Figure~\ref{fig:Design-9_10} has the greatest impact on accuracy as it includes a parking area with many vehicles. 
% In contrast, \textit{Block A}, containing a natural landscape area, has the least impact on the accuracy.
% The RAIs for \textit{Block C} and \textit{D}, which include moving vehicles at the intersection, dominate alternately.

\mysubsubsection{Priority-based Feedback\label{subsubsec:rtcp_based}}
%\textbf{RTCP APP packets.} 
%The RTCP protocol \cite{sarker2021rtp, novotny2008large} complements RTP by monitoring and maintaining the quality of streaming media. RTCP provides critical feedback on metrics such as packet loss and jitter, enabling real-time adjustments to enhance the user experience. Among the various RTCP packet types, application-specific (APP) packets are particularly notable for their ability to transmit customized information tailored to the needs of specific applications. These APP packets facilitate the integration of specialized feedback and control mechanisms, which are crucial for optimizing performance in scenarios such as video conferencing and online gaming.
%\shan{We may consider move this part into background, and briefly mention RTCP APP in the next part.}
%\yu{OK!}

%\textbf{RTCP-based priority feedback.}
%As we know, the RTCP protocol \cite{sarker2021rtp, novotny2008large} complements RTP by monitoring and maintaining the quality of streaming media. 
%RTCP provides critical feedback on metrics such as packet loss and jitter, enabling real-time adjustments to enhance the user experience. 
%$\newline$
The second issue is how to transmit the feedback for different regions from the edge server back to the client, so that it efficiently encode such information into new packets.
To resolve this issue, we leverage RTCP \cite{sarker2021rtp, novotny2008large}, which works alongside RTP to collect statistics (such as packet loss, jitter, and delay) and provide feedback to the client. 
%Fortunately, the Real-Time Control Protocol (RTCP) \cite{sarker2021rtp, novotny2008large} works alongside RTP to monitor data transmission quality. 
%It collects statistics such as packet loss, jitter, and delay, and provides feedback to the sender. 
%This feedback mechanism in RTCP can be leveraged to support efficient packet discarding at the bottleneck router when addressing drastic ABW degradation.
Among the various RTCP packet types, application-specific (APP) packets \cite{montagud2012enhanced} are especially significant as they enable customized feedback tailored to the specific application.

As illustrated in Figure \ref{fig:Design-8}, we extend the RTCP APP packet structure to incorporate priority information. 
Consistent with the bit allocation for priorities defined in the RTP header extension (\S\ref{subsec:encode-packet}), we reserve 4 bits for each block to represent its priority in the current frame. 
This adds only 8 bytes per feedback message, negligible compared to a typical 40–60 byte RTCP packet~\cite{schulzrinne2003rfc3550} and well within the 5\% RTCP bandwidth share recommended by RFC 3550, even with per-frame feedback.
When the current priority order $BlockPri$ differs from the previous $BlockOrderLast$, each block's priority is embedded into the structure. 
%Additionally, we introduce a feedback coefficient $\lambda$ to flexibly adjust the frequency of the priority-based feedback as $FeedbackInterval > \lambda$.\shan{Shall we remove $lambda$? we didn't evaluate and explain well how to set $lambda$}
%\begin{equation}
%    FeedbackInterval > \lambda.
This design enables real-time feedback, ensuring that critical video content is prioritized, thus facilitating proactive packet discarding at edge routers.
% \shan{It would be helpful if we could include, the best choice of lambda under different cases. I thought we should send the feedback immediately, since the video content on the server is already stale.}
%\shan{Maybe consider adding more explanation on how our feedback handles the moving camera cases, and how we handle the delayed or outdated feedback issue.} 
Figure \ref{fig:Eva_overall_MobileCamera_2} depicts the variation in feedback intervals across different video datasets. \SYS adaptively increases the feedback frequency, particularly for moving camera videos, to ensure priorities for packet discarding remain up-to-date.
%\S \ref{sec:evaluation} (\textcolor{blue}{Todo...}).
%\shan{ we can briefly talk about how the feedback mechanism deals with delays, moving camera, the sensitivity, besides a pointer to evaluation section. }

\MyPara{\rev{Bounding priority staleness.}}

\rev{Priority staleness is naturally bounded because packet discarding is transient and priorities are recomputed from every received frame. To further guard against sustained congestion, \SYS uses lightweight client-side priority aging and forced exploration, periodically promoting long-suppressed blocks so that newly important regions are eventually observed.}

%\begin{figure}[t!]
%    \centering
%    \includegraphics[width=1\linewidth]{figure/Design-5.pdf}
%    \vspace{-4mm}
%    \caption{Tracking packet sizes and priorities, and updating packet enqueue and dequeue rates over time.}
%    \vspace{-4mm}
%    \label{fig:Design-5}
%\end{figure}

\mysubsection{Adaptive Packet Discarding
\label{subsec:packetDiscarding}}
%\shan{We can explain that the changes can be implemented easily on the Linux-based edge routers first.}
As most last-mile routers in edge networks are Linux-based \cite{Anurag, vieira2020fast} and support flexible traffic management via software, we leverage this capability to implement adaptive packet discarding at an edge router with eBPF \cite{vieira2020fast}. However, since the packet dropping algorithm that runs on an edge server only performs a few checks (see Algorithm~\ref{Alg:Design-1}), it can be easily re-implemented with any vendor SDKs or ASIC APIs.

\mysubsubsection{ABW Drop Detection\label{subsubsec:network-metric}}
\MyPara{Updating enqueue \& dequeue rates.}
To enable \SYS's proactive packet discarding, the edge router estimates the enqueue and dequeue rates as indicators of the ABW conditions. 
%Figure \ref{fig:Design-5} illustrates how the ABW monitor records and updates these rates over time by tracking packet sizes. 
Upon the arrival of each packet, its size is recorded; however, the instantaneous enqueue and dequeue rates are not calculated immediately. 
Instead, the calculation is deferred until the time interval since the last rate update reaches a duration of $T_{win}$.
Taking both accuracy and timeliness into account, we judiciously set $T_{win}$ to 100 ms (see further analysis in \apx{Appendix \ref{appendix:analysis_for_t_win}}{\cite{lizardExtended}}).
As a result, at current time $t_0$, we calculate the packet enqueue rate as $v_{enq}(t_0) =$\vspace{-0.5mm}
%As shown in Figure \ref{fig:Design-5}, at current time $t_0$, we calculate the packet enqueue rate as
\begin{equation}
    \label{Equ:v_enq}
     %\frac{pktSizeAccum(t_0) - pktSizeAccum(t_1)}{t_0 - t_1},
     (pktSizeAccum(t_0) - pktSizeAccum(t_1))/(t_0 - t_1),\vspace{-0.5mm}
\end{equation}
where $t_0 - t_1$ slightly exceeds $T_{win}$, and $pktSizeAccum(t)$ represents the cumulative size of enqueued packets from the start up to time $t$.
The packet dequeue rate $v_{deq}(t_0)$ is calculated similarly to Equation (\ref{Equ:v_enq}) and is omitted here.

%\begin{figure}[t!]
%    \centering
%    \includegraphics[width=1\linewidth]{figure/Design-4.pdf}
%    \vspace{-4mm}
%    \caption{State transitions between \textbf{INIT}, \textbf{PRE} and \textbf{PD} for packet discarding at the bottleneck router.}
%    \vspace{-2mm}
%    \label{fig:Design-4}
%\end{figure}
  
\MyPara{ABW drop detection.} 
The ABW monitor actively analyzes the enqueue and dequeue rates to detect any ABW drop conditions, denoted as $DetectAbwDrop$. 
Specifically, when a new packet is enqueued, $DetectAbwDrop$ checks whether the following two conditions are simultaneously met:\vspace{-0.5mm}
%\begin{equation}
%    \label{Equ:DetectAbwDrop_1}
%    v_{deq}(t_0) < \gamma_1 v_{enq}(t_0),
%\end{equation}
%\begin{equation}
%    \label{Equ:DetectAbwDrop_2}
%    v_{deq}(t_0) < \gamma_2 v_{deq}(t_1),
%\end{equation}
\begin{equation} 
\label{Equ:DetectAbwDrop_1}
v_{deq}(t_0) < \gamma_1 v_{enq}(t_0)\quad\textnormal{and}\quad v_{deq}(t_0) < \gamma_2 v_{deq}(t_1), \vspace{-0.5mm}
\end{equation} 
where $0 < \gamma_1, \gamma_2 < 1$ represent the detection sensitivity coefficients, and they can be adjusted (e.g., $\gamma_1 = 0.7, \gamma_2 = 0.7$) to promptly detect different degrees of ABW drops.
The first condition in Equation (\ref{Equ:DetectAbwDrop_1}) reflects the difference between dequeue and enqueue rates at current time $t_0$, while the second condition captures the decline in the dequeue rate from the previous time $t_1$ to the current $t_0$.
Furthermore, we prove that satisfying both of the two conditions is essential for accurately detecting an ABW drop in \apx{Appendix \ref{appendix:proof_of_proposition}}{\cite{lizardExtended}}. 
%Furthermore, we prove that verifying the simultaneous satisfaction of the two conditions is essential for accurately detecting an ABW drop in Proposition \ref{thrm:DetectAbwDrop} in Appendix \ref{appendix:proof_of_proposition}. 

\MyPara{Detecting ABW returning to normal.} After an ABW drop is detected, the ABW monitor keeps monitoring to determine whether the ABW has returned to a steady state, denoted as  $CheckAbwSteady$.
The most straightforward idea for is that as ABW stabilizes, the previously much higher enqueue rate $v_{enq}$ will gradually get close to the dequeue rate $v_{deq}$, i.e.,\vspace{-1mm}
\begin{equation}
    \label{Equ:checkAbwSteady1}
    {|v_{enq} - v_{deq}|}/{v_{deq}} \leq \epsilon,\vspace{-1mm}
\end{equation}
where $\epsilon$ (e.g., $\epsilon = 0.1$) represents a small quantity, indicating a minimal difference between the enqueue and dequeue rates.
However, due to proactive packet discarding, the dequeue rate reflects video data rate after packet discarding, which is always significantly lower than the enqueue rate. 
Therefore, Equation (\ref{Equ:checkAbwSteady1}) cannot be directly applied to detect when ABW has returned to a steady state. 
Instead, we compare the upper bound of the dequeue rate $\widehat{v}_{deq}$ with the enqueue rate for the $CheckAbwSteady$ process.
In other words, the dequeue rate $v_{deq}$ in Equation (\ref{Equ:checkAbwSteady1}) should be replaced by $\widehat{v}_{deq}$.
%expressed as:
%\begin{equation}
%    \label{Equ:checkAbwSteady1}
%    {|v_{enq} - \widehat{v}_{deq}|}/{\widehat{v}_{deq}} \leq \epsilon.
%\end{equation}
Since we cannot easily obtain the upper bound of the dequeue rate $\widehat{v}_{deq}$, we set it to the last recorded dequeue rate before the ABW drop is detected by $DetectAbwDrop$.

\begin{algorithm}[t]
    \footnotesize
    \caption{Phase Transition for Packet Discarding}
    \label{Alg:Design-1}
    \KwIn{
        $phase \leftarrow \textnormal{\textbf{INIT}}$; 
        $v_{enq} \leftarrow 0$;
        $v_{deq} \leftarrow 0$;
        % $setB \leftarrow \textnormal{False}$;
    }
    \While {\textnormal{Enqueue new packet $pkt$}}{
        \If{$phase == \textnormal{\textbf{INIT}} \ \&\& \ DetectAbwDrop()$}{
            $phase \leftarrow \textnormal{\textbf{PRE}}$; 
            $\tau \leftarrow \theta \cdot \max(0,\,1 - v_{deq}/v_{enq})$;
        }
        \ElseIf{$phase == \textnormal{\textbf{PRE}}$}{
            \If{$Check1stPktInFrame(pkt.\textnormal{\textbf{CSR}}, pkt.\textnormal{\textbf{S}})$}{
                $phase \leftarrow \textnormal{\textbf{PD}}$; 
                $pkt.\textnormal{\textbf{C}} \leftarrow \textnormal{True}$;
            }
        }
        \ElseIf{$phase == \textnormal{\textbf{PD}}$}{
            \If{$CheckAbwSteady()$}{
                $phase \leftarrow \textnormal{\textbf{INIT}}$;
                % $setB \leftarrow \textnormal{False}$;
            }
            \ElseIf{$pkt.\textnormal{\textbf{CSR}} \geq \tau$}{
                $pkt.\textnormal{\textbf{C}} \leftarrow True$; $setB \leftarrow \textnormal{False}$;
                }
            \ElseIf{$Check1stPktNextBlk(pkt.\textnormal{\textbf{S}}, setB)$}{
                    $pkt.\textnormal{\textbf{C}} \leftarrow \textnormal{True}$;
                    $pkt.\textnormal{\textbf{B}} \leftarrow \textnormal{True}$;
                    $setB \leftarrow \textnormal{True}$; 
                }
                \Else{
                    DiscardPacket($pkt$); Continue;
                }
            }
        }
        Dequeue packet $pkt$; \textit{Update}$(v_{enq}, v_{deq})$;
\end{algorithm}

\subsubsection{Phased Packet Discarding\label{subsubsec:state}}
\MyPara{Phase transitions.}
In our \SYS system, we define three phases for packet discarding at an edge router: \textbf{INIT}, \textbf{PRE}, and \textbf{PD}.
%as in Figure \ref{fig:Design-4}.
The phase transitions for packet discarding are outlined in Algorithm \ref{Alg:Design-1}, where the initial phase is set to \textbf{INIT}. 
% When a new packet enters the queue, its size and priority are tracked, as indicated in line 2. 
Subsequently, as described in lines 2-15, the last-mile router responds to the current phase, with potential phase transitions occurring under specific conditions. 
Finally, the packet is dequeued, and the enqueue and dequeue rates are updated according to Equation (\ref{Equ:v_enq}), as shown in line 16. 
Based on Algorithm \ref{Alg:Design-1}, we detail all phases and their transitions as follows:

(1) \textbf{INIT} for ABW drop detection (\textit{lines 2-3}):
In the \textbf{INIT} phase, we continuously monitor for a $\mathit{DetectAbwDrop}$ signal from the ABW monitor. Once the signal is received, indicating that a drop in ABW is detected, Algorithm \ref{Alg:Design-1}
computes a drop threshold $\tau$ based on the current bandwidth deficit ratio:
\begin{equation}
   \label{equ:delayMitigation}
    \tau \leftarrow \theta \cdot \max(0,\,1 - v_{deq}/v_{enq})
\end{equation}
where the bandwidth deficit is calculated with $(1 - v_{deq}/v_{enq})$, indicating the fraction of incoming traffic that exceeds the capability at the edge router. $\theta$ is a configurable delay mitigation coefficient (\eg $\theta = 1$ by default) that controls the aggressiveness of discarding to alleviate the queuing delay at the router.

%\del{calculates the discarding priority $Dis$, which will guide packet discarding in the later \textbf{PD} phase.
%In line 5 of Algorithm \ref{Alg:Design-1}, $DecideDisPri$ is employed to determine the discarding priority $Dis$ based on the historical records of packet sizes and priorities as:\vspace{-1mm}
%\begin{equation}
%    Dis \leftarrow \underset{0 \leq pri < |\textnormal{\textbf{Pri}}|}{\arg\min} \frac{\sum_{pkt.\textnormal{\textbf{Pri}} \geq pri} pkt.size}{\sum_{pkt.\textnormal{\textbf{Pri}} \geq 0} pkt.size}\vspace{-1mm}
%\end{equation}\vspace{-1mm}
%\begin{equation}
 %   \label{equ:delayMitigation}
 %   s.t. \quad 
 %   \frac{\sum_{pkt.\textnormal{\textbf{Pri}} \geq pri} pkt.size}%{\sum_{pkt.\textnormal{\textbf{Pri}} \geq 0} pkt.size}  \geq \theta (1 - %\frac{v_{deq}}{v_{enq}}).\vspace{0mm}
%\end{equation}
%The discarding priority $\mathit{Dis}$ is set as a threshold, where packets with priority values larger than $\mathit{Dis}$ are discarded to alleviate queuing delays caused by the imbalance between enqueue and dequeue rates.
%Besides, $\theta$ denotes the delay mitigation coefficient and can be preset (\eg $\theta = 1$) to alleviate queuing delay at varying degrees at the edge router.}
% \yq{This algorithm seems also assuming some per-flow state to determine Dis for each client.}

(2) \textbf{PRE} for preparing packet discarding (\textit{lines 4-6}): 
Upon detecting an ABW drop, the \textbf{INIT} phase transitions to the \textbf{PRE} phase. 
The \textbf{PRE} phase acts as an intermediary between the \textbf{INIT} and \textbf{PD} phases, preventing immediate proactive packet discarding after the ABW drop is detected. 
This transitional phase is necessary for the following two reasons:
\vspace{-0em}
\squishlist
    \item As shown in Figure \ref{fig:Design-7},  upon the detection of an ABW drop, the current packet whose $\mathit{pkt}.\textnormal{\textbf{CSR}} < \tau$ 
    %\del{priority $pkt.\textnormal{\textbf{Pri}} > 0$}
    is immediately discarded, indicating that no discarding-related information has been embedded into the previous packets in current frame. 
    Consequently, the video analytics server cannot distinguish between packet loss caused by \SYS and other unexpected cases (\eg AQM mechanisms \cite{ryu2004advances, hollot2002analysis} or channel interferences \cite{kihero2021wireless, villegas2007effect}). 
  %  \vspace{-1em}
    \item Moreover, as shown in Figure \ref{fig:Design-7}, since the potential discarding threshold may be higher or lower than the current packet's CSR, a uniform execution of packet discarding is not feasible for those two cases.
\squishend

Therefore, line 5 determines whether the current packet is the first packet of the frame with \texttt{Check1stPktInFrame} by checking if (1) its CSR equals 100\% (encoded as 127), indicating that it belongs to the block with the highest importance (priority=15), and (2) it contains the block start indicator, as expressed by the following two conditions:\vspace{-1.5mm}
\begin{equation}
    \label{Equ:checkFirstPacketInFrame}
    pkt.\textnormal{\textbf{CSR}} == 127\quad\textnormal{and}\quad pkt.\textnormal{\textbf{S}} == True.\vspace{-1.5mm}
\end{equation}
When Equation (\ref{Equ:checkFirstPacketInFrame}) is met, the \textbf{PRE} phase transitions to \textbf{PD}.

\begin{figure}[t!]
    \centering
    \includegraphics[width=0.8\linewidth]{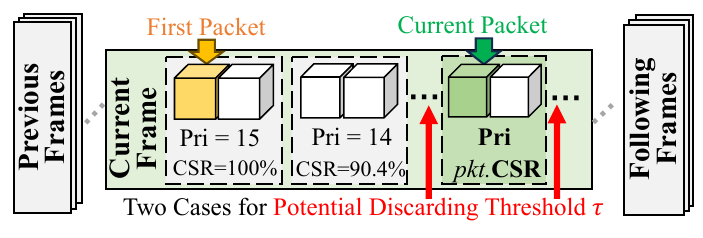}
    \vspace{0mm}
    \caption{Potential discarding threshold in \textbf{PRE} phase.}
    \vspace{2mm}
    \label{fig:Design-7}
\end{figure}

(3) \textbf{PD} for packet discarding (\textit{lines 7-15}):
In the \textbf{PD} phase, the packet discarding manager makes discarding decisions based on the packet discarding threshold $\tau$, which was calculated in the \textbf{INIT} phase. 
Specifically, for each packet, the packet discarding manager compares its priority value $\mathit{pkt}.\textnormal{\textbf{CSR}}$ with $\tau$.

At an edge router, each packet is compared against this threshold: packets whose $CSR \geq \tau$ are retained and their C bit is set to indicate proactive discarding in the frame (\textit{line 10-11}), while packets whose $CSR < \tau$ are subject to dropping. 

To make discarding decisions explicit, Algorithm~\ref{Alg:Design-1} marks a \emph{dropping boundary}. 
Because the client transmits packets in descending order of CSR, the router encounters them approximately in importance order. 
Instead of attempting to identify the last above-threshold packet\textemdash which would require knowledge of future arrivals\textemdash the procedure \texttt{Check1stPktNextBlk} designates the first block-start packet ($S{=}1$) with $\mathit{CSR} < \tau$ as the boundary. 
This packet is retained and flagged with $B$, and all subsequent packets in the frame are discarded. 
Boundary marking allows the video analytics server to decode the frame without unnecessary retransmission requests (see \S\ref{subsubsec:header-ext}).

\del{By combining threshold-based discarding with explicit boundary marking, Algorithm~\ref{Alg:Design-1} enables precise and stateless packet dropping, without introducing per-flow state, while ensuring that video analytics can continue seamlessly even under severe bandwidth deficits.}

% If it exceeds $Dis$, the packet is discarded with $DiscardPacket$ (as shown in lines 12-15).
% Otherwise, the packet discarding manager embeds the discarding priority into the surviving packet as $pkt.Dis$, assisting the video analytics server with depacketization and decoding the frame without triggering unnecessary retransmission requests. 
% More details on retransmission avoidance can be found in \S\ref{subsubsec:header-ext}. 
%More details on how the video server avoids retransmission using the discarding information can be found in \S\ref{subsubsec:header-ext}.

%To summarize, Figure \ref{fig:Design-4} illustrates the state transitions for proactive packet discarding at the bottleneck router between \textbf{INIT}, \textbf{PRE}, and \textbf{PD}. Notably, when the state is either \textbf{PRE} or \textbf{PD}, it indicates that a sudden and sharp drop in available bandwidth is currently occurring.

%\subsubsection{Discarding Priority-Aware Bitrate Adjustment}
%In coordination with the bottleneck router's proactive packet discarding, the video analytics server dynamically adjusts the target bitrate, which is then communicated to the sender.

\MyPara{Multiple Video Flows.} 
When multiple video analytics flows share the same edge link, \SYS must balance proactive discarding across flows. 
If the edge router already maintains per-flow priority queues\textemdash a capability supported by many edge routers~\cite{rfc8290, cisco_wfq_cfg_2008,cisco_cbwfq_feat_12_0t, juniper_cos_fc_overview,juniper_cos_custom_fc}, fairness can be naturally enforced by applying \SYS within each flow's queue independently. 
In this setting, each flow experiences discarding decisions aligned with its own CSR encoding, while the underlying scheduler ensures fair capacity allocation across the flows.
When per-flow queues are unavailable, however, the router can no longer guarantee per-flow isolation. 
In such cases, our objective shifts to maximizing the \emph{average analytics accuracy} across all flows, ensuring that the shared link is utilized in a way that best preserves global application-level performance.

%% file: 6-evaluation.tex
\mysection{Evaluation
\label{sec:evaluation}}
%In this section, we first describe the experimental setup and then present the evaluation results of \SYS.
%\mysubsection{\SYS Implementation}

%\begin{figure}[t!]
%    \centering
%    \includegraphics[width=0.9\linewidth]{figure/Implementation.pdf}
%    \vspace{0mm}
%    \caption{Overview of \SYS implementation. \hx{Should this overview go to the front (beginning of \S\ref{sec:design})}?}
%    \vspace{0mm}
%    \label{fig:Implementation}
%\end{figure}

% Please add the following required packages to your document preamble:
% \usepackage[table,xcdraw]{xcolor}
% Beamer presentation requires \usepackage{colortbl} instead of \usepackage[table,xcdraw]{xcolor}

\begin{figure*}[!t]
    \vspace{0mm}    
    \centering
    \subfigbottomskip=1mm %设置第二行子图与第一行子图的距离，即下面的头与上面的脚的距离
    \subfigcapskip=0mm %设置子图与子标题之间的距离
    % 第一行
    \subfigure[PCTL frame delays, 99.9\% (ms): \newline (\textbf{152}, 301, 572)]{
        \includegraphics[width=0.235\linewidth]{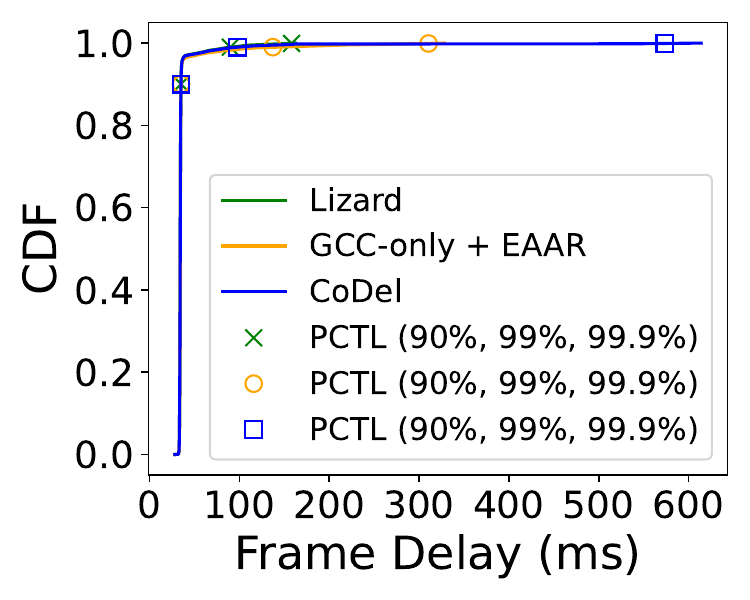}
        \label{fig:Eva_overall_1}
    }\hfill
    \subfigure[PCTL accuracy on Auburn, 0.1\%: \newline (\textbf{0.46}, 0.40, 0.40, 0.43)]{
        \includegraphics[width=0.235\linewidth]{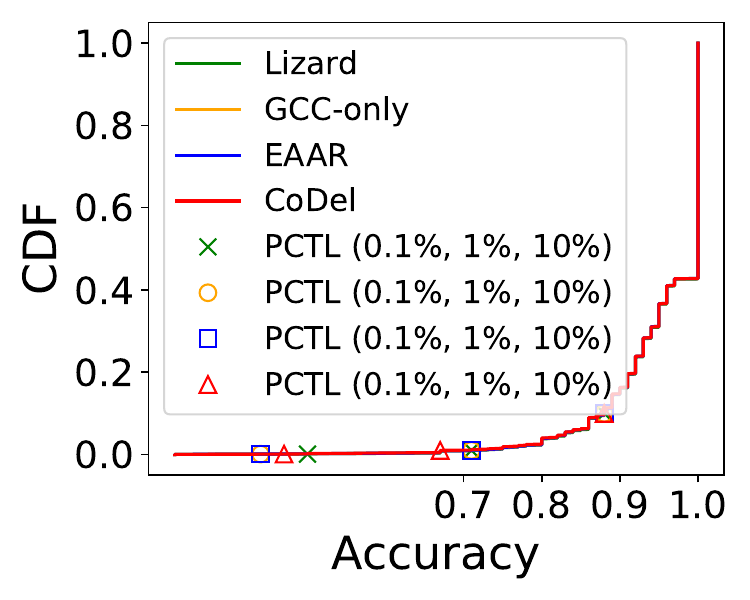}
        \label{fig:Eva_overall_2}
    }\hfill
    \subfigure[PCTL accuracy on Banff, 0.1\%: \newline  (\textbf{0.76}, 0.72, 0.72, 0.68)]{
        \includegraphics[width=0.235\linewidth]{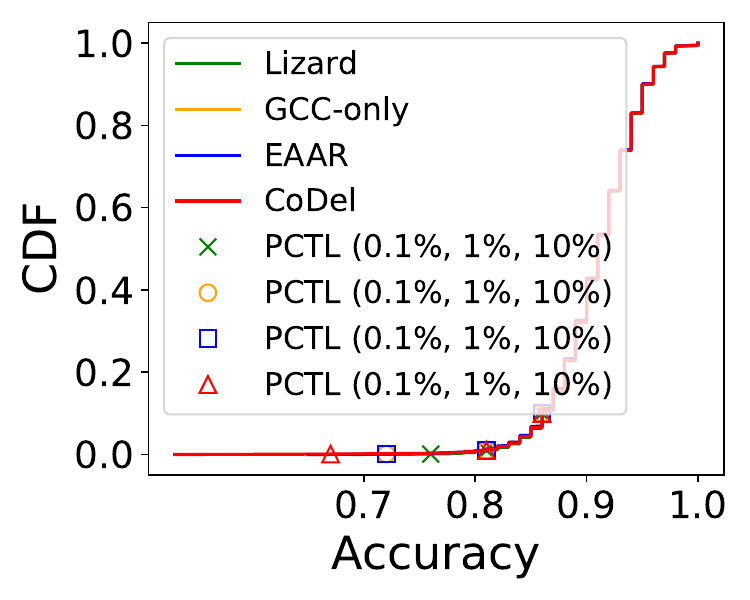}
        \label{fig:Eva_overall_3}
    }\hfill
    \subfigure[PCTL accuracy on Jacksonhole, 0.1\%:  (\textbf{0.80}, 0.50, 0.58, 0.58)]{
        \includegraphics[width=0.235\linewidth]{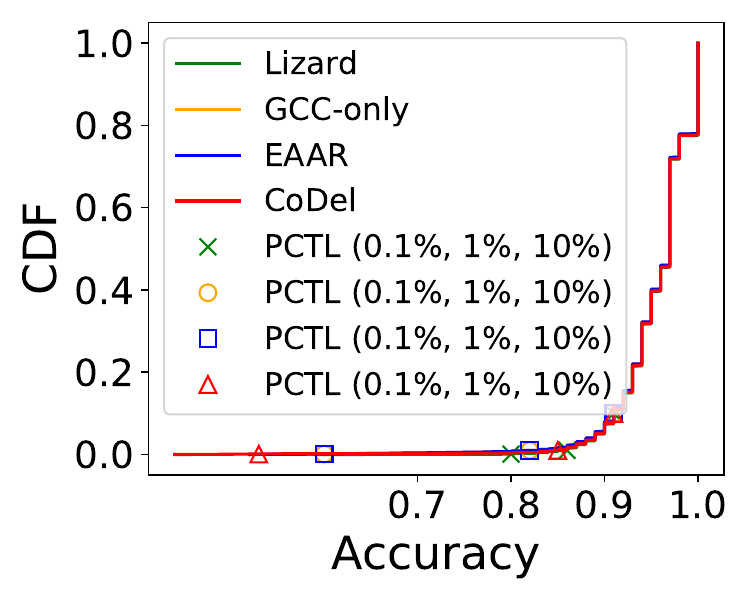}
        \label{fig:Eva_overall_4}
    }
    \vspace{0mm}
    \caption{Evaluated with network dataset Oboe, using different video datasets Auburn, Banff and Jacksonhole.
    \label{fig:Eva_1_2_3_4}
    %\shan{It's hard for me to identify the results of Lizard and GCC-only, as the line of CoDel covers all others.}\yu{OK. I will add some explanations in body text.}
    }
    \vspace{-4mm}
\end{figure*}

\begin{figure*}[!t]
    \vspace{0mm}
    \centering
    \subfigbottomskip=0mm %设置第二行子图与第一行子图的距离，即下面的头与上面的脚的距离
    \subfigcapskip=0mm %设置子图与子标题之间的距离
    % 第一行
    \subfigure[PCTL frame delays, 99.9\% (ms): \newline  (\textbf{337}, 591, 989)]{
        \includegraphics[width=0.235\linewidth]{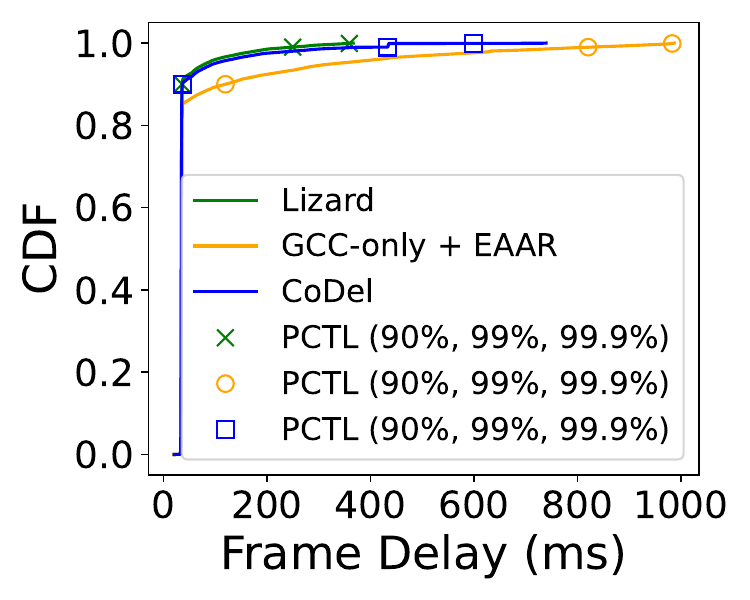}
        \label{fig:Eva_overall_5}
    }\hfill
    \subfigure[PCTL accuracy on Auburn, 0.1\%: \newline (\textbf{0.61}, 0.51, 0.51, 0.51)]{
        \includegraphics[width=0.235\linewidth]{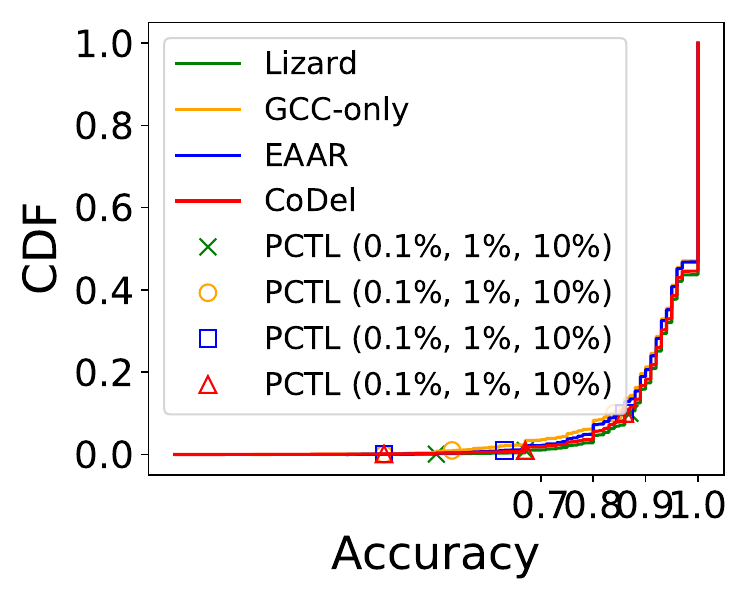}
        \label{fig:Eva_overall_6}
    }\hfill
    \subfigure[PCTL accuracy on Banff, 0.1\%: \newline (\textbf{0.75}, 0.62, 0.58, 0.68)]{
        \includegraphics[width=0.235\linewidth]{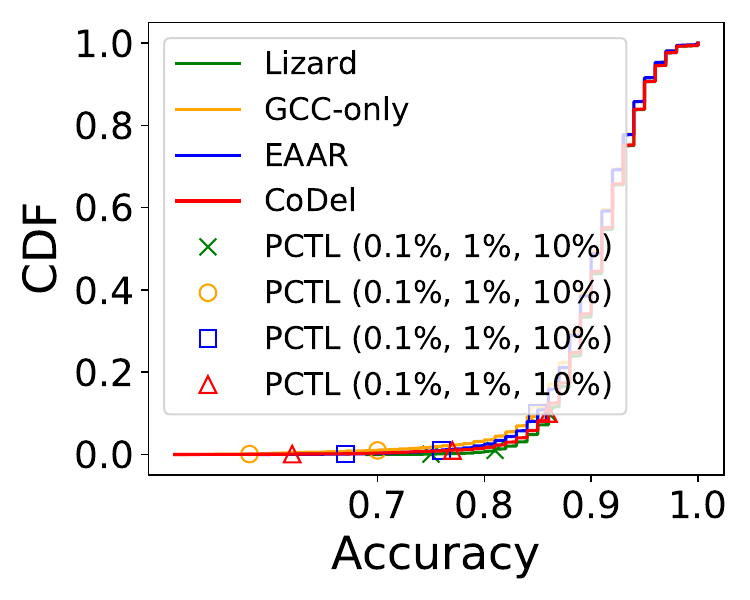}
        \label{fig:Eva_overall_7}
    }\hfill
    \subfigure[PCTL accuracy on Jacksonhole, 0.1\%: (\textbf{0.78}, 0.72, 0.58, 0.48)]{
        \includegraphics[width=0.235\linewidth]{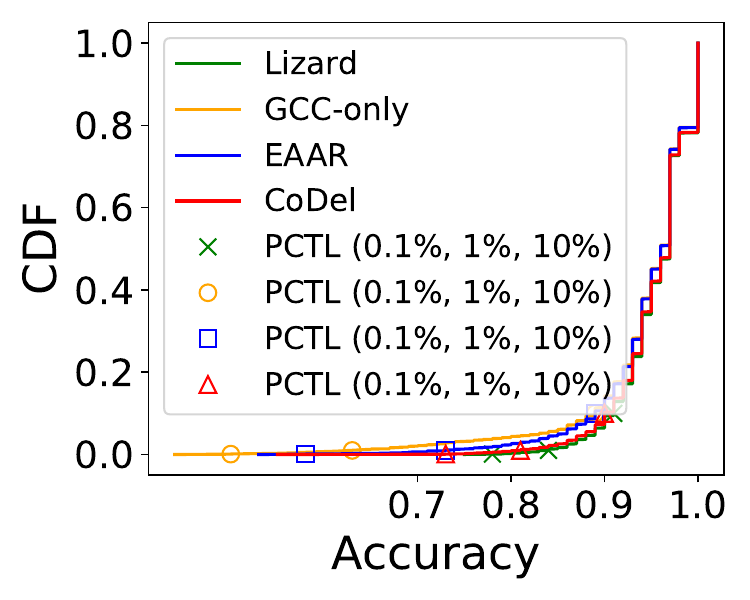}
        \label{fig:Eva_overall_8}
    }
    \vspace{0mm}
    \caption{Evaluated with network dataset Ghent, using different video datasets Auburn, Banff and Jacksonhole.}
    \label{fig:Eva_5_6_7_8}
    \vspace{-4mm}
\end{figure*}

%\subsection{Experiment Setup}
\MyPara{Implementation.} We set up an edge-based real-time video analytics system on a Mininet testbed \cite{kaur2014mininet}, running Ubuntu 20.04 with an Intel Haswell CPU and an NVIDIA Tesla T4 GPU. The testbed, based on the topology shown in Figure \ref{fig:Motivation-Background}, consists of three network nodes: a client, an edge router, and a video analytics server\textemdash a typical setup in edge networks \cite{wang2020joint, liu2019edge}.
RTP/RTCP-based real-time video packet transmission was implemented using aiortc \cite{aiortc}, and video analytics implemented on PyTorch \cite{paszke2019pytorch}. A detailed description of \SYS's implementation can be found in \apx{\S\ref{appendix:detailed_implementation}}{the extended version~\cite{lizardExtended}}.

\MyPara{Datasets.}
We evaluated \SYS using two real-world network bandwidth datasets: Oboe \cite{Oboe} and Ghent \cite{Ghent} where the distribution of the length and frequency of ABW reductions is reported in Table \ref{tab:ABW_drop_frequency}.
We used four real-world video datasets including three from surveillance cameras (\ie Auburn \cite{auburndata}, Banff \cite{banff}, and Jacksonhole \cite{jackson}), as well as a mobile AV camera dataset nuScenes \cite{nuscenes2019}. 
%Table \ref{tab:video_dataset} reports their statistics. \rn{if tight on space, you can remove the previous sentence+table}
%For video analytics, we use the video datasets Auburn \cite{auburndata}, Banff \cite{banff}, and Jacksonhole \cite{jackson}, with their basic descriptions provided in Table \ref{tab:video_dataset}.
%\hx{I am here.}

\MyPara{Metrics for video analytics.}
In our experiments, we focused on the task of object detection \cite{zou2023object}, a fundamental aspect of video analytics.
 Specifically, we leveraged the object detection model YOLOX \cite{ge2021yolox} to target two key detection tasks: vehicle detection and pedestrian detection. The number of blocks N was configured to 4 to run \SYS. A detailed evaluation of the impact of N will be discussed shortly in \S\ref{subsubsec:parameterAndOverhead}.
 
The evaluation of accuracy for these tasks was conducted as follows.
In edge-based real-time video analytics, the client receives the analysis results with a certain delay.
Thus, these results must be compared with the ground truth from subsequent frames.
For each frame $i$, let $L_{i}^{\textnormal{Det}}$ denote the detection results, a list of bounding boxes.
The ground truth detection results for frame $i$ are represented by $L_{i}^{\textnormal{GT}}$.
For every bounding box $bbx^{\textnormal{Det}}$ in $L_{i}^{\textnormal{Det}}$, we assess whether there is a corresponding bounding box in $L_{i}^{\textnormal{GT}}$ such that their Intersection Over Union (IOU) \cite{rezatofighi2019generalized} exceeds a predefined threshold (\eg 85\%).
If such a bounding box is found, $bbx^{\textnormal{Det}}$ is classified as a True Positive (TP) \cite{lepeschkin1958characteristics}.
Let $tp_i$ denote the count of TP bounding boxes in frame $i$.
The accuracy using F1-score \cite{goutte2005probabilistic} is:%\vspace{-1mm}
% \begin{equation}
% \footnotesize Acc = \mathbb{E}_{i}\left[\frac{2 \cdot \textnormal{precision} \cdot \textnormal{recall}}{\textnormal{precision} + \textnormal{recall}}\right] = \mathbb{E}_{i}\left[\frac{2 \cdot tp_i}{\textnormal{len}(L_{i}^{\textnormal{Det}}) + \textnormal{len}(L_{i}^{\textnormal{GT}})}\right],\vspace{-1mm}
% \end{equation}
\begin{equation}
\footnotesize
Acc = \mathbb{E}_{i}\left[\frac{2 \cdot \textnormal{precision} \cdot \textnormal{recall}}{\textnormal{precision} + \textnormal{recall}}\right] 
= \mathbb{E}_{i}\left[\frac{2 \cdot tp_i}{\textnormal{len}(L_{i}^{\textnormal{Det}}) + \textnormal{len}(L_{i}^{\textnormal{GT}})}\right],
\end{equation}

\vspace{-1mm}

where $\textnormal{len}(L)$ represents the number of bounding boxes in $L$, and $\mathbb{E}[\cdot]_i$ denotes the average value across all frames.
This evaluation method is applied to both vehicle and pedestrian detection tasks to assess their respective performances. 
Additionally, we measured the frame delay as the time interval from when a frame is encoded on the client to when it is reassembled on the server and becomes ready for processing by the video analytics model.
Based on related work \cite{meng2022achieving, meng2023enabling}, tail latency and its impact on network system performance are critical. 
Therefore, in our experimental results, we also include metrics for tail frame delay (\eg 99\% and 99.9\%) and tail accuracy (\eg 1\% and 0.1\%).

\MyPara{Baselines.}
We compared \SYS with these baselines, corresponding to client-based, router-based, and server-based methods to enhance edge-based real-time video analytics:\vspace{-0.25em}
\squishlist
    \item GCC-only \cite{carlucci2016analysis} relies solely on the client's bitrate controller GCC to passively adjust the real-time video bitrate in order to reduce the queue delay at the last-mile router. However, this reactive bitrate adjustment introduces a lag of at least one RTT, as analyzed in \S\ref{subsec:motivation2}, which can degrade real-time video analysis performance. \vspace{-0.25em}
    \item CoDel \cite{sharma2014controlling} discards packets based on the network condition of queuing delay at the last-mile router. However, it is limited by the interdependence of packets within a video frame and can compromise the timeliness of real-time video analytics due to packet retransmission. \vspace{-0.25em}
    \item EAAR \cite{liu2019edge} detects delayed frames (\eg delay $> 400\,\mathrm{ms}$) on the server and uses a motion vector-based prediction approach \cite{liu1993new} to obtain video analytics results for them. Its predictive performance depends on the rate of change in the video content over time, which is inherently uncertain. \vspace{-1em}
\squishend
%\shan{Maybe add more explanations on the comparisons between the baselines and our method. How does one differ from another? What are their strengths/weakness?}
%\yu{I will add some explanations}

%\mysubsection{Experimental Results
%\label{subsec:ExperimentResults}}
\mysubsection{Overall Performance
\label{subsubsec:overallperformance}}
%\MyPara{Overall results.}
We present the results of frame delay and accuracy for \SYS compared to three baselines across varying network bandwidth traces, video datasets, and video analysis tasks, as shown in Figures \ref{fig:Eva_1_2_3_4}, \ref{fig:Eva_5_6_7_8}, and \ref{fig:Eva_9_10}. 
We also report the actual 99.9\% latency and 0.1\% accuracy numbers in each caption for clarity, and highlighted are \SYS’s performance in Figures \ref{fig:Eva_1_2_3_4} and \ref{fig:Eva_5_6_7_8}. \rev{In the frame-delay CDFs, GCC-only and EAAR share one curve because EAAR changes only server-side prediction, not the transport, so their frame delays are identical while their accuracies differ.}
% Overall, \SYS surpasses the baselines, reducing the 99.5th percentile frame delay by 53.2\% and enhancing accuracy by up to 27.1\%. \rev{The gains concentrate in the tail, which full-range CDFs compress visually, so the captions list the tail values: the 99.9th-percentile frame delay is 152 ms for \SYS versus 301 ms (GCC-only+EAAR) and 572 ms (CoDel) on Oboe, and 337 versus 591 and 989 ms on Ghent, a 2--3$\times$ reduction (see also the drill-down CDFs in \S\ref{sec:drilldown}). The tail is where real-time analytics breaks: medians are similar across schemes, but Table~\ref{tab:frame_delay_accuracy} shows accuracy degradation growing from 4.70\% to 23.51\% (person) and 2.73\% to 27.38\% (vehicle) as frame delay rises from 50 to 800 ms, and SLA violations are incurred on precisely these tail frames~\cite{yan2022resource, xu2024adaptive}.}
% Regarding the frame delay, CoDel suffers from a significant number of retransmissions, while both GCC-only and EAAR exhibit significant lags in responding to drastic ABW changes. 
Overall, \SYS outperforms all baselines, reducing the 99.5th-percentile frame delay by 53.2\% and improving accuracy by up to 27.1\%. The gains are concentrated in the tail: at the 99.9th percentile, \SYS reduces frame delay from 301 to 152~ms on Oboe and from 591 to 337~ms on Ghent compared with GCC-only+EAAR, with even larger gains over CoDel (572 to 152~ms and 989 to 337~ms, respectively). This tail reduction is critical for real-time analytics, where delayed frames incur substantially larger accuracy loss and SLA violations (see Table~\ref{tab:frame_delay_accuracy}). Regarding the frame delay, CoDel is mainly penalized by retransmissions, while GCC-only and EAAR exhibit significant lags in responding to drastic ABW changes.

In terms of accuracy, the high frame delays in GCC-only and CoDel compromise the real-time nature of video analysis, thereby reducing the accuracy of the results. 
Although EAAR can leverage the motion vector \cite{liu2019edge} to compute video analytics results based on historical data when the frame delay is high (\eg 400 ms), its predictive accuracy in the temporal dimension is limited.
Consequently, it underperforms \SYS, which discards less important blocks in the spatial dimension to optimize performance.
\apx{A detailed analysis of \SYS's time-series performance can be found in \S\ref{sec:drilldown}.}{}

\MyPara{Various network bandwidth traces.}
This experiment evaluates the performance of \SYS on different network bandwidth datasets, Oboe and Ghent. 
As shown in Table \ref{tab:ABW_drop_frequency}, we observe that the Ghent dataset exhibits more frequent and severe ABW degradations compared to Oboe. 
Therefore, the network conditions reflected in the Ghent dataset present a greater need for the application of \SYS and offer more potential for performance improvement.
We analyze the experimental results presented in Figures \ref{fig:Eva_1_2_3_4} and \ref{fig:Eva_5_6_7_8} and observe that on the Oboe dataset, the CDFs of \SYS and the baselines are relatively close due to a relatively small number of sharp ABW reductions, while the Ghent dataset contains many more ABW drops and, hence, the gap between \SYS and the baselines' CDFs is more pronounced. 
Overall, \SYS consistently outperforms the baselines in terms of 0.1th percentile accuracy. 
Specifically, the 0.1th percentile accuracy improvements of \SYS over the baselines are 12\% for the Oboe dataset and 19\% for the Ghent dataset. 
% These results underscore the explainable differences in \SYS's performance improvements across different network datasets.

\begin{figure}[!t]
    \vspace{0mm}
    \centering
    \subfigbottomskip=0mm %设置第二行子图与第一行子图的距离，即下面的头与上面的脚的距离
    \subfigcapskip=0mm %设置子图与子标题之间的距离
    \subfigure[Frame delay and accuracy.]{
        \includegraphics[width=0.47\linewidth]{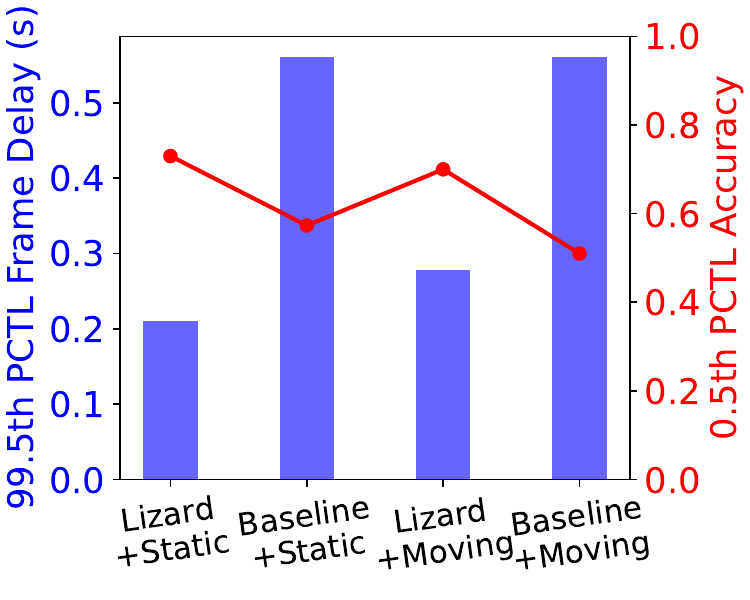}
        \label{fig:Eva_overall_MobileCamera_1}
    }\hfill
    \subfigure[Varying feedback intervals.]{
        \includegraphics[width=0.47\linewidth]{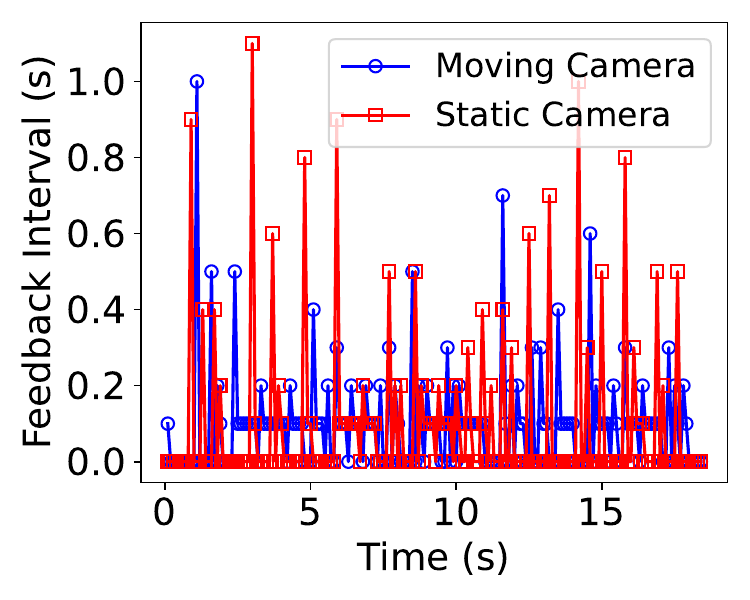}
        \label{fig:Eva_overall_MobileCamera_2}
    }
    \caption{Evaluated with moving and static camera videos.}
    \label{fig:Eva_overall_MobileCamera}
    \vspace{0mm}
\end{figure}

\MyPara{Various video datasets.}
Furthermore, we evaluated the performance of \SYS using different video datasets. 
We observe that the Auburn, Banff, and Jacksonhole datasets differ in terms of the proportion of the area occupied by the target objects, with Auburn and Banff having relatively lower proportions and Jacksonhole having a higher one. 
In other words, the Jacksonhole dataset provides \SYS with a larger decision space for proactive packet dropping, allowing it to maximize its benefits, while the Auburn and Banff datasets offer a more constrained decision space. 
We analyzed the results from Figures \ref{fig:Eva_overall_2}, \ref{fig:Eva_overall_3} and \ref{fig:Eva_overall_4} across these datasets and calculate the 0.1th percentile accuracy improvements of \SYS on Auburn, Banff, and Jacksonhole, which are 16\%, 13\%, and 38\%, respectively. 
%They are consistent with our previous analysis and support the rationale behind it.

In addition to the static camera videos above, we evaluated \SYS on moving camera videos (\ie nuScenes) in Figure \ref{fig:Eva_overall_MobileCamera}.
Moving camera videos generally have more dynamic content and fewer discardable regions that do not contain interesting targets. 
As shown in Figure \ref{fig:Eva_overall_MobileCamera_1}, under the same network conditions (using Oboe and Ghent), GCC-only (the baseline) incurs a 99.5th percentile frame delay of nearly 600 ms for both video types. 
\SYS reduces this delay to $\thicksim$200 ms for static camera videos by discarding low-importance packets. 
For moving camera videos, \SYS slightly increases the 99.5th percentile frame delay to 278 ms compared to static camera videos, as fewer regions are safely discardable. 
This reduces the accuracy to 0.69, as compared to 0.73 for static videos.
% As a result, the 0.5th percentile accuracy for moving camera videos drops from 0.73 to 0.69. 
However, \SYS improves accuracy by approximately 0.1 over the baseline for both video types.
Figure \ref{fig:Eva_overall_MobileCamera_2} shows the variation in feedback intervals over time for \SYS’s priority-based feedback mechanism. 
A value of 0 indicates no feedback is needed, while values above 0 reflect the time since the last feedback. 
For static camera videos, most data points are near 0 or higher intervals, indicating infrequent updates. 
In contrast, moving camera videos experience frequent priority changes, prompting \SYS to adaptively increase feedback frequency to ensure up-to-date priority data for packet discarding.

\begin{figure}[!t]
    \vspace{0mm}
    \centering
    \subfigbottomskip=0mm %设置第二行子图与第一行子图的距离，即下面的头与上面的脚的距离
    \subfigcapskip=0mm %设置子图与子标题之间的距离
    \subfigure[Task of vehicle detection.]{
        \includegraphics[width=0.47\linewidth]{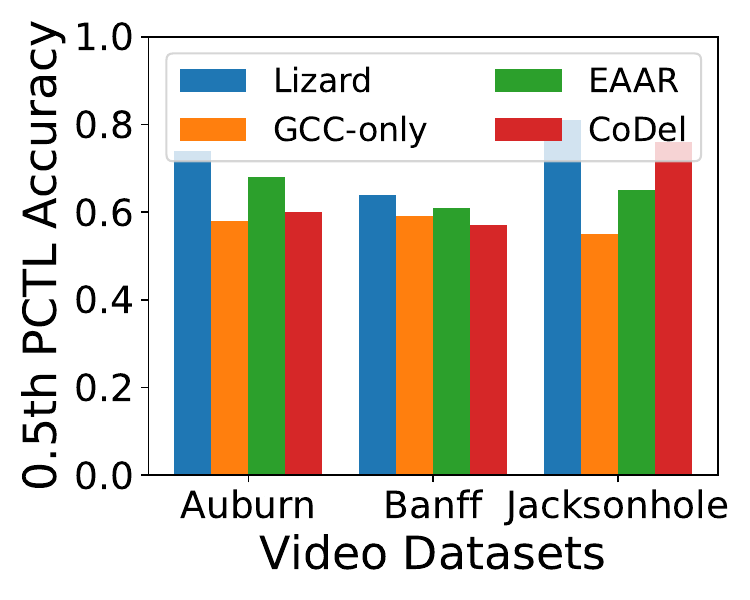}
        \label{fig:Eva_overall_9}
    }\hfill
    \subfigure[Task of person detection.]{
        \includegraphics[width=0.47\linewidth]{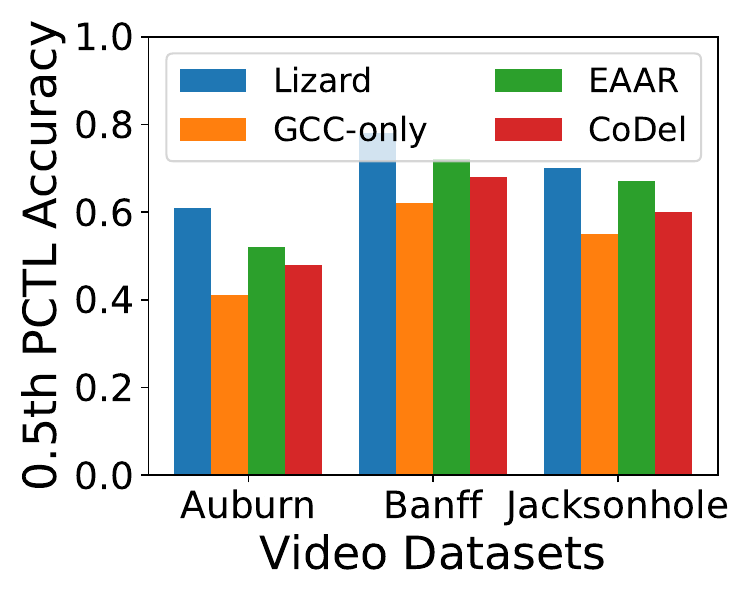}
        \label{fig:Eva_overall_10}
    }
    \caption{Evaluated with various video analytics tasks.}
    \label{fig:Eva_9_10}
    \vspace{0mm}
\end{figure}

\MyPara{Various video analytics tasks.}
To measure the impact of varying target occupation ratios in different video datasets on \SYS's performance improvement, we conducted a fine-grained evaluation for vehicle and person detection tasks. 
The experimental results are presented in Figure \ref{fig:Eva_9_10}. 
Specifically, we calculate the performance improvements of \SYS over the baseline for both vehicle and person detection tasks across the different video datasets. 
We find that on the Jacksonhole dataset, \SYS's 0.5th percentile accuracy improvement for the vehicle detection task (\ie 24\%) is greater than that for the person detection task (\ie 17\%). However, on the Auburn and Banff datasets, \SYS's accuracy improvements for the vehicle detection task (\ie 19\% and 8\%, respectively) are lower than those for the person detection task (\ie 27\% and 15\%, respectively).
An inspection found that, on the Jacksonhole dataset, vehicle targets are more clustered and predictable compared to person targets. 
This allows \SYS to achieve greater performance gains through proactive packet dropping. 
Similarly, the distribution characteristics of person targets in the Auburn and Banff datasets provide \SYS with more opportunities to enhance the performance of real-time video analysis for person detection.

\begin{figure}[!t]
    \vspace{0mm}
    \centering
    \subfigbottomskip=0mm %设置第二行子图与第一行子图的距离，即下面的头与上面的脚的距离
    \subfigcapskip=0mm %设置子图与子标题之间的距离
    \subfigure[Accuracy impacted by block sizes.]{
        \includegraphics[width=0.46\linewidth]{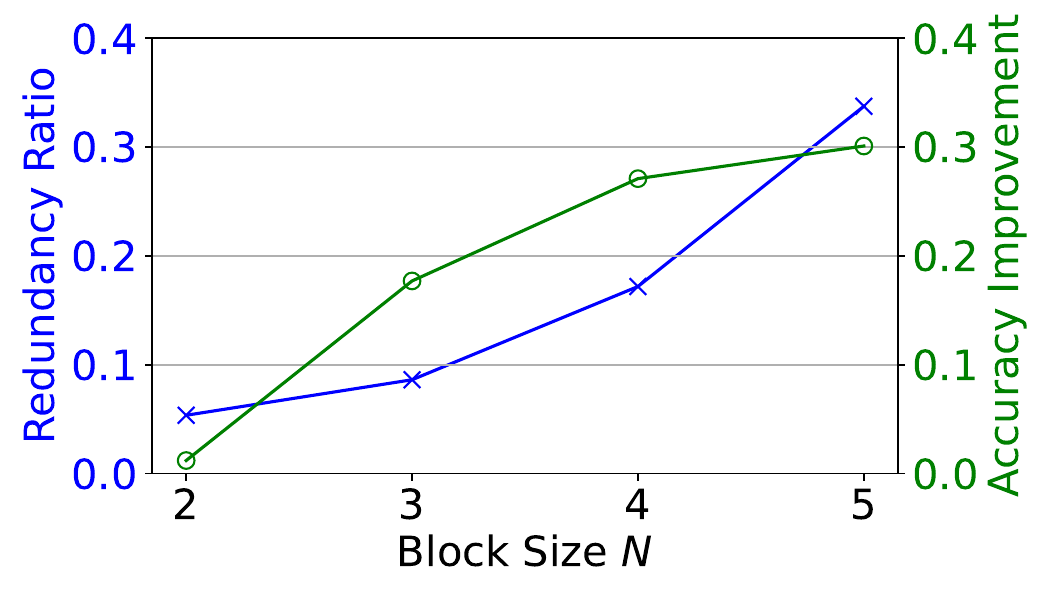}
        \label{fig:Eva_parameter_bsize}
    }\hfill
    \subfigure[Impact of parameter $\theta$.]{
        \includegraphics[width=0.46\linewidth]{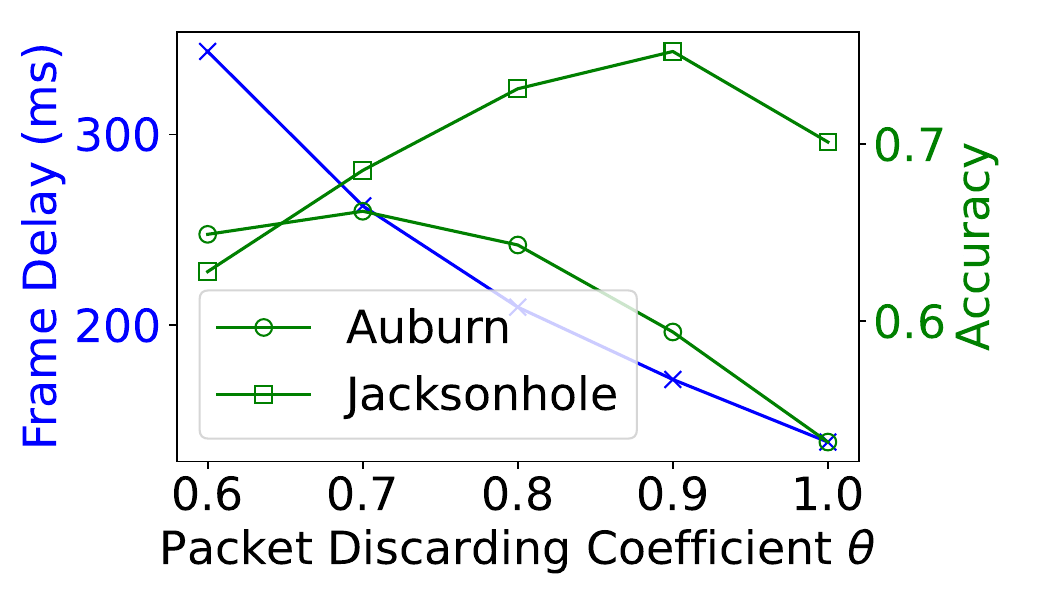}
        \label{fig:Eva_parameter_11_12}
    }\\
    \subfigure[Time cost of each module.]{
        \includegraphics[width=0.46\linewidth]{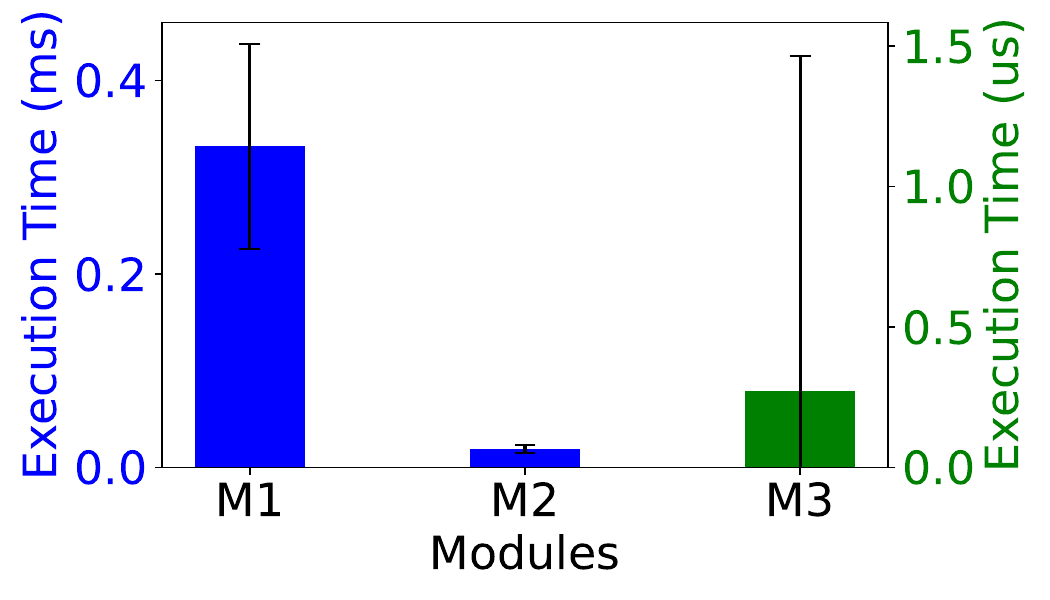}
        \label{fig:Eva_overhead_13}
    }\hfill
    \subfigure[Time cost of packet discarding.]{
        \includegraphics[width=0.46\linewidth]{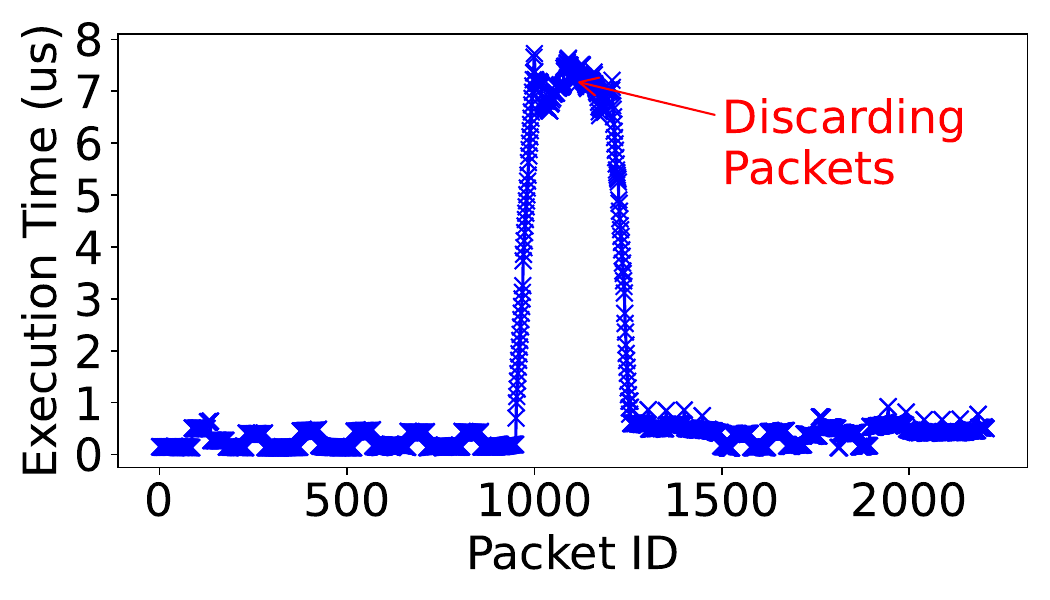}
        \label{fig:Eva_overhead_14}
    }
    \caption{Parameter impact and system overhead.}
    \label{fig:Eva_parameter_overhead}
    \vspace{0mm}
\end{figure}

\mysubsection{Parameter Impact and Overhead
\label{subsubsec:parameterAndOverhead}}
\MyPara{Impact of block size selection.}
We investigated the impact of block size on data redundancy and accuracy in the presence of ABW drops; their results are reported in Figure \ref{fig:Eva_parameter_bsize}. The left Y-axis represents the data redundancy ratio introduced by each block configuration while the right Y-axis represents the accuracy improvements over the baseline (GCC-only). 
Given this experiment focuses on the impact of N, we developed a simulation that allows us to evaluate with a larger N (\eg beyond the upper bound supported by \SYS, which is 4).
When coarse-grained blocks are used (\eg $N = 2$), discarding blocks can result in the loss of blocks containing important objects, leading to a minimal accuracy improvement of 0.012. Increasing N from 2 to 5 improves the accuracy when ABW drops as finer-grained blocks allow \tool to discard less critical blocks. However, the accuracy improvements come at the cost of increased redundancy\textemdash the redundancy ratio reaches up to more than 30\% when the block size is set to 5. Our evaluation suggests that N=4 strikes the right balance.

\MyPara{Impact of packet discarding coefficient.}
The delay mitigation parameter $\theta$, defined in Equation (\ref{equ:delayMitigation}), can be used as the packet discarding coefficient. 
% This coefficient can be adjusted on demand to flexibly control the packet discard ratio in response to sudden drops in network bandwidth. 
We varied $\theta$ from 0.6 to 1.0 and applied it to different video datasets. 
The 99.5th percentile frame delay and accuracy were then evaluated under various conditions, with the results presented in Figure \ref{fig:Eva_parameter_11_12}.
These results highlight two major takeways:
\textit{(1) The packet discarding coefficient $\theta$ is not always better when increased.} 
For both the Auburn and Jacksonhole datasets, increasing $\theta$ can effectively alleviate queuing delays at the router, thereby reducing frame delay. 
However, once $\theta$ exceeds a certain threshold, accuracy begins to decline. 
This is because discarding too many packets negatively impacts video analytics accuracy.
\textit{(2) The optimal value of the packet discarding coefficient $\theta$ varies across different video datasets.}
We observe that the optimal $\theta$ for Auburn and Jacksonhole is not the same. 
The optimal $\theta$ for Auburn (\ie 0.7) is lower than that for Jacksonhole (\ie 0.9). 
This difference aligns with our previous analysis, which indicates that the distribution of important information (\eg the detected objects) in video frames varies across different video datasets.

\MyPara{Time components of \SYS.}
We measure the runtime overhead of \SYS's three main modules: block-aware RTP header extension (M1), priority-based video analytics feedback (M2), and phase-transition-based packet discarding (M3), as shown in Figure~\ref{fig:Eva_overhead_13}. M1 and M2 run at the client and server, respectively, at the frame level; both take less than 1~ms, negligible compared with the 33~ms frame interval at 30~fps. M3 runs at the router at the packet level, averaging 0.27~$\mu$s per packet with a standard deviation of 1.19~$\mu$s. Figure~\ref{fig:Eva_overhead_14} explains this variance: proactive dropping adds about 7~$\mu$s per packet, while the non-proactive phase incurs little overhead. \apx{\S\ref{sec:drilldown} further evaluates \SYS's overhead for a specific analytics task.}{}

%% file: 2-relatedwork.tex
\mysection{Related Work
\label{sec:relatedwork}}
\MyPara{Client-side optimizations.}
Client-side optimizations~\cite{wang2020joint, zhang2021adaptive, shi2023adapyramid, li2020reducto, xie2019source, du2022accmpeg} reduce latency and bandwidth consumption by processing or adapting video close to the source. Wang \textit{et al.}~\cite{wang2020joint} optimize configuration choices under bandwidth constraints, while Zhang \textit{et al.}~\cite{zhang2021adaptive} extend this approach to multi-edge settings. AdaPyramid~\cite{shi2023adapyramid} dynamically adapts configurations within each high-resolution frame; Reducto~\cite{li2020reducto} filters frames on camera before transmission; and AccMPEG~\cite{du2022accmpeg} optimizes video encoding for low latency and high DNN accuracy. More recent systems further reduce transmitted data through on-device frame gating~\cite{hu2015offload, iyengar2023offload}, content-aware block-level quantization~\cite{zhou2023bandwidth}, and filtering of redundant spatial-temporal semantics~\cite{chen2023edge}.
These techniques are proactive with respect to \emph{content}, but remain reactive to \emph{bandwidth}: endpoints detect congestion only after queues form at the router (\S\ref{subsec:motivation2}). \SYS complements them by exposing content priorities to the network, allowing clients to transmit low-priority frames or blocks when bandwidth permits and letting the router discard them only when ABW drops. This preserves quality at high bandwidth while adapting quickly to sudden degradation.
\begin{comment}
Client-side optimizations \cite{wang2020joint, zhang2021adaptive, shi2023adapyramid, li2020reducto, xie2019source, du2022accmpeg} focus on reducing latency, conserving bandwidth, and distributing computational load by processing data closer to the source.
Wang \textit{et al.} \cite{wang2020joint} optimize configuration choices under limited bandwidth, while Zhang \textit{et al.} \cite{zhang2021adaptive} extend this to a multi-edge scenario.
Shi \textit{et al.} \cite{shi2023adapyramid} propose AdaPyramid, which dynamically adapts configurations within each frame for high-resolution video, improving speed and accuracy.
Li \textit{et al.} \cite{li2020reducto} introduce Reducto, applying on-camera filtering to reduce data processing and transmission.
%Xie \textit{et al.} \cite{xie2019source} present a compression technique for IoT edge devices, balancing perception loss and efficiency.
Du \textit{et al.} \cite{du2022accmpeg} offer AccMPEG, optimizing video encoding for reduced latency and high DNN accuracy.
While they can adapt to drastic ABW degradation, they are inherently reactive. 
As previously discussed in \S\ref{subsec:motivation2}, these client-side methods exhibit significant lag, rendering them inadequate for the ultra-low-latency requirements of edge-based real-time video analytics.
\end{comment}

\MyPara{Server-side optimizations.}
Server-side optimizations \cite{canel2019scaling, yuan2022infi, yuan2023accdecoder, mi2024accelerated, han2015deep, yuan2023packetgame} improve video analytics by leveraging edge server resources to efficiently handle computational demands, optimizing inference in resource-limited settings.
\del{Canel et al. address resource-limited edge devices by reducing computational load and managing bandwidth. Yuan et al. introduce a learnable input filtering framework for mobile-centric inference, minimizing unnecessary data processing. Yuan et al. focus on accelerated decoding for real-time video analytics by selectively filtering frames. Mi et al. extend this by incorporating video quality adaptation for dynamic frame and quality adjustment. Han et al. propose deep compression techniques to reduce storage and computation in deep neural networks. Yuan et al. propose multi-stream packet gating to optimize large-scale concurrent video inference.}
\rev{They scale analytics on constrained nodes~\cite{canel2019scaling}, filter inputs~\cite{yuan2022infi}, accelerate decoding and adapt quality~\cite{yuan2023accdecoder, mi2024accelerated}, compress models~\cite{han2015deep}, or gate packets across streams~\cite{yuan2023packetgame}.}
However, when confronted with severe ABW drops at the last-mile bottleneck router, these server-side methods must endure prolonged frame delays before executing further optimizations for video analytics on the edge server.

\MyPara{Proactive packet discarding at routers.}
Router-based proactive packet discarding \cite{patil2019towards, sharma2014controlling, liu2018adaptive, carlucci2018controlling, hamadneh2012dynamic}  aims to reduce latency by preemptively dropping packets to prevent queue overflow and mitigate congestion.
\del{Patil et al. use fluid modeling to improve the understanding and analysis of the controlled delay algorithm in reducing network latency. Sharma et al. explore how the controlled delay algorithm helps mitigate the bufferbloat problem by controlling queue delay in internet traffic. Liu et al. present an adaptive AQM algorithm that employs a novel information compression model to optimize network congestion control. Carlucci et al. investigate the interaction between end-to-end and AQM algorithms for controlling queuing delays in real-time communications. Hamadneh et al. introduce a dynamic weight parameter for improving the performance of the random early detection algorithm in TCP networks.}
\rev{They model and tune CoDel~\cite{patil2019towards, sharma2014controlling}, compress queue-state information for AQM~\cite{liu2018adaptive}, study how AQM interacts with end-to-end RTC control~\cite{carlucci2018controlling}, or adapt RED's weights~\cite{hamadneh2012dynamic}.}
When applied to edge-based real-time video analytics, they lead to retransmission delays and frame-agnostic packet loss, and lack dynamic application-specific optimization.

%% file: 7-conclusion.tex
% \vspace{-.5em}
\mysection{Conclusion
\label{sec:conclusion}}
This paper presents \tool, a packet discarding technique to mitigate the impact of drastic ABW degradation at last-mile edge routers on real-time video analytics.
\SYS incorporates block information into RTP headers, evaluates the priority of each packet at the server side, and discards packets with low importance on edge routers upon sharp ABW drops. 
An extensive evaluation with real-world network traces and video datasets demonstrates that \SYS maintains low latency and high accuracy in edge-based real-time video analytics.

\section*{Acknowledgment}
We thank the anonymous reviewers for their comments.
This work was supported by an AWS AI PhD Fellowship, the National Science Foundation under grants CNS-2301343, CNS-2330831, CNS-2403254, IIS-2546642, and IIS-2551122, as well as an Alibaba Research Internship. 

%\vspace{3pt}
%\noindent {\em This work does not raise any ethical issues.}

%% file: Appendix_1.tex
%\clearpage

\section{Estimated number of RTP packets in a video frame}
\label{appendix:estimated_number_of_RTP_packets}
We show the estimated number of RTP packets per video frame in Figure \ref{fig:Design-3}.
\begin{figure}[ht]
    \centering
    \includegraphics[width=1\linewidth]{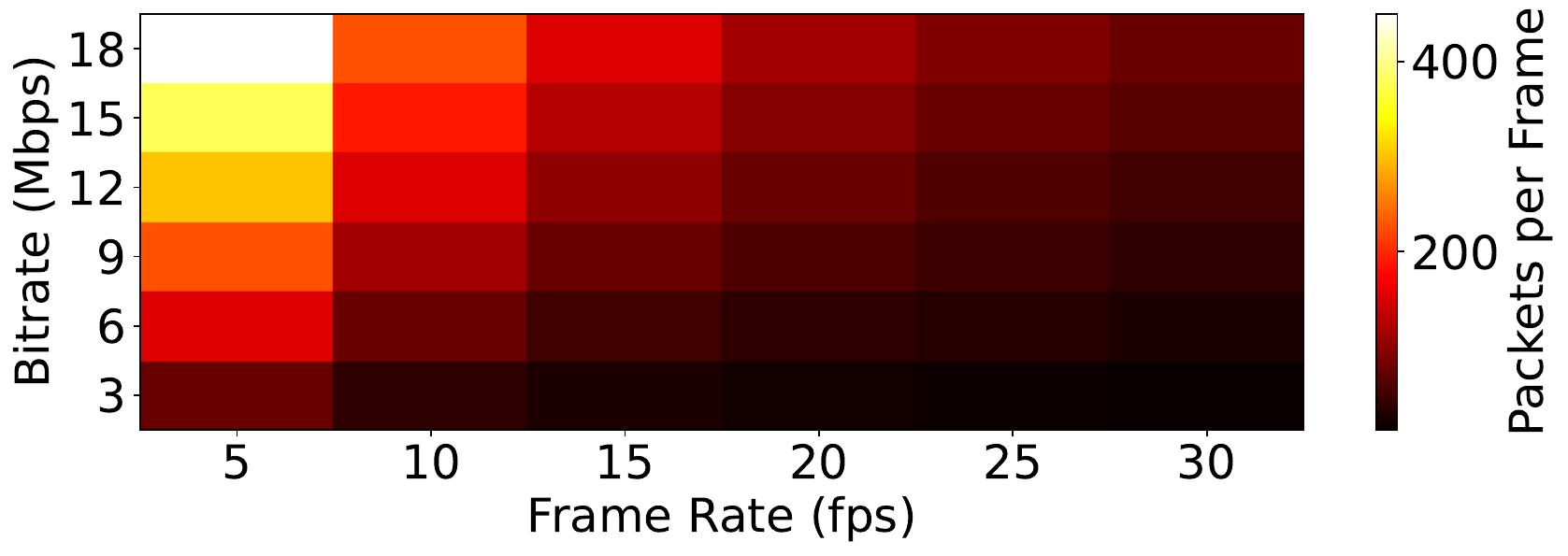}
    \vspace{0mm}
    \caption{Estimated number of RTP packets per video frame for common video bitrates and frame rates, assuming each packet is approximately 1000 bytes.}
    \vspace{0mm}
    \label{fig:Design-3}
\end{figure}

\section{Summary for block-aware RTP header extension}
\label{appendix:summary_for_rtp_extension}
We summarize the embedder and parser for each field with bit size in block-aware RTP header extension and limitations addressed in Table \ref{tab:Design-1}. 
Specifically, the limitations 1, 2 and 3 are abbreviated as $L_1$, $L_2$ and $L_3$, respectively.
\begin{table}[ht]
\caption{Embedder and parser for each field with bit size in block-aware RTP header extension and limitations addressed.}
    \footnotesize
    \centering
    \begin{tabular}{|c|cc|c|ccc|}
        \hline
        \textbf{\begin{tabular}[c]{@{}c@{}}Field\\Category\end{tabular}} & \multicolumn{2}{c|}{\begin{tabular}[c]{@{}c@{}}\textbf{Block-to-frame}\\\textbf{Metadata}\end{tabular}} & \begin{tabular}[c]{@{}c@{}}\textbf{Block}\\\textbf{Importance}\end{tabular} & \multicolumn{3}{c|}{\begin{tabular}[c]{@{}c@{}}\textbf{Retransmission}\\\textbf{Control}\end{tabular}} \\ 
        \hline
        
        \textbf{Field Name} & \multicolumn{1}{c|}{\textbf{S, E}} & \textbf{X, Y} & \textbf{CSR} & \multicolumn{1}{c|}{\textbf{C}} & \multicolumn{1}{c|}{\textbf{B}} & \textbf{Rem} \\ 
        \hline
        
        \textbf{Bit Size} & \multicolumn{1}{c|}{1, 1} & 1,1 & 7 & \multicolumn{1}{c|}{1} & \multicolumn{1}{c|}{1} & 9 \\ 
        \hline
        
        \textbf{Embedder} & \multicolumn{2}{c|}{Client} & Client & \multicolumn{2}{c|}{Router} & Client \\ 
        \hline
        
        \textbf{Parser} & \multicolumn{2}{c|}{Server} & Router & \multicolumn{3}{c|}{Server} \\ 
        \hline

        \textbf{\begin{tabular}[c]{@{}c@{}}Limitation\\Addressed\end{tabular}} & \multicolumn{2}{c|}{{\textbf{L$_1$}}} & {\textbf{L$_2$}} & \multicolumn{3}{c|}{{\textbf{L$_3$}}} \\ 
        \hline
    \end{tabular}
% \vspace{2mm}
\label{tab:Design-1}
% \vspace{4mm}
\end{table}

\section{Proposition \ref{thrm:DetectAbwDrop} for ABW drop detection}
\label{appendix:proof_of_proposition}
We present Proposition \ref{thrm:DetectAbwDrop} for ABW drop detection as follows.
\begin{proposition}
    \label{thrm:DetectAbwDrop}
    The failure to satisfy both of the two conditions in Equation (\ref{Equ:DetectAbwDrop_1}) could lead to a false detection of an ABW drop.
\end{proposition}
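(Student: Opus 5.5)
The plan is to prove the proposition by two counterexamples, one for each condition in Equation~(\ref{Equ:DetectAbwDrop_1}). For each, I will build a concrete scenario in which that single condition holds and there is no genuine drop in ABW. If that condition alone were used as the detector, it would fire, giving a false detection. Together the two scenarios show that requiring both conditions is needed.

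First I fix a simple fluid model of the router. Let $C(t)$ be the link capacity (ABW) and $r(t)$ the aggregate arrival rate. Then $v_{enq}(t)=r(t)$ over a window, and $v_{deq}(t)=\min\{C(t),\,r(t)+\text{backlog drained}/T_{win}\}$. In words, the router dequeues at rate $C$ while the queue is nonempty and at rate $r$ when it is empty. An ``ABW drop'' means $C(t_0)\le \gamma C(t_1)$ for a significant fractional decrease; a false detection is the detector firing while $C(t_0)=C(t_1)$.

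\emph{Case 1: only the first condition.} Keep $C$ constant and let the arrival rate jump, for example because a competing uplink application or the video source bursts to $r(t_0) > C/\gamma_1$. The queue is then nonempty, so $v_{deq}(t_0)=C<\gamma_1 v_{enq}(t_0)$ and the first condition holds. But the capacity has not changed, so $v_{deq}(t_0)=v_{deq}(t_1)$ (taking the queue as already saturated at $t_1$), and the second condition fails. The imbalance here is ordinary congestion from demand growth, not an ABW drop. A standard AQM or GCC handles this, and discarding would sacrifice accuracy for no reason.

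\emph{Case 2: only the second condition.} Keep $C$ constant with the queue empty, so $v_{deq}=v_{enq}=r$. Let the sender reduce its rate, for example via a GCC backoff, or let the flow enter a low-activity period with small P-frames, so that $r(t_0)<\gamma_2 r(t_1)$. Then $v_{deq}(t_0)<\gamma_2 v_{deq}(t_1)$ and the second condition holds. But $v_{deq}(t_0)=v_{enq}(t_0)$, so the first condition fails because $\gamma_1<1$. The dequeue rate fell only because less traffic arrived.

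Finally, I will argue the converse direction informally. When $C$ genuinely drops by a factor below $\min(\gamma_1,\gamma_2)$ while the input remains at least the old capacity, both conditions hold. So the conjunction still detects true drops, and each case above is ruled out by the other condition.

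The main obstacle is the lack of a formal definition of ``ABW drop'' and ``false detection'' in the paper. I would therefore fix the fluid-model definitions above explicitly, so that the counterexamples are rigorous rather than illustrative.
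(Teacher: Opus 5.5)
Your proposal is correct and follows the paper's proof: the paper also gives one counterexample per condition, the sender's additive-increase phase (a rising sending rate, so only the first condition holds) and its multiplicative-decrease phase (the enqueue rate falls below the dequeue rate, so the dequeue rate declines but the first condition fails), and your fluid model just makes these explicit. One caution: drop the ``competing uplink application'' instance from Case 1 and read $C(t)$ as the bandwidth available to the video flow rather than raw link capacity, because the paper's detector tracks the video stream's own enqueue/dequeue rates, and a cross-traffic spike is exactly what the paper counts as an ABW drop (both conditions would fire, correctly), so only the video-source burst serves as a false-detection example.
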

\begin{proof}
    Only satisfying the second condition in Equation (\ref{Equ:DetectAbwDrop_1}) indicates a decrease in the dequeue rate, which is a necessary but not sufficient condition for detecting an ABW drop.
    A counterexample, as shown in Figure \ref{fig:Design-6}, is the multiplicative decrease phrase, causing the enqueue rate of new coming video data to be lower than the dequeue rate, thereby failing to meet the first condition in Equation (\ref{Equ:DetectAbwDrop_1}). 
    Similarly, only satisfying the first condition in Equation (\ref{Equ:DetectAbwDrop_1}) is also insufficient, as it may merely reflect the additive increase phase.
    In summary, detecting a sudden and sharp drop in ABW requires both Equations in (\ref{Equ:DetectAbwDrop_1}) to be simultaneously satisfied.
\end{proof}

\begin{figure}[ht]
    \centering
    \includegraphics[width=1\linewidth]{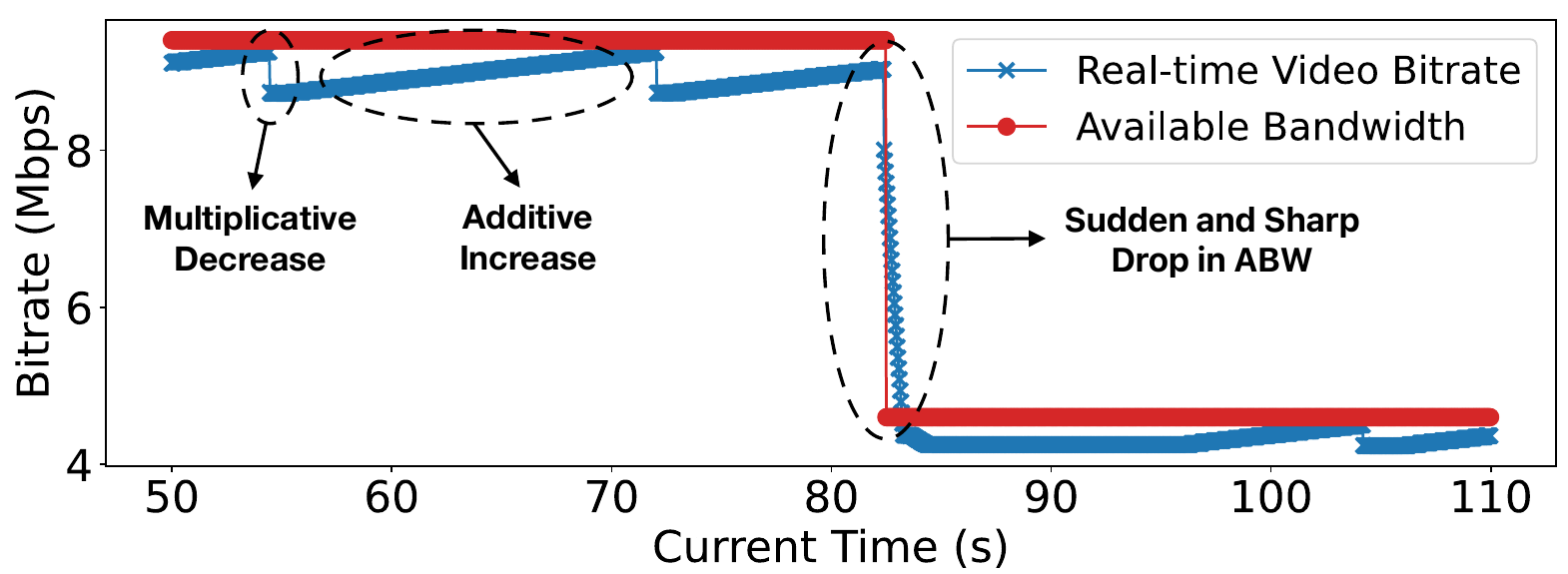}
    \vspace{0mm}
    \caption{Changes in real-time video bitrate before and after a sudden and sharp drop in available bandwidth.}
    % \vspace{2mm}
    \label{fig:Design-6}
\end{figure}

\section{Analysis for choice of $T_{win}$}
\label{appendix:analysis_for_t_win}
The choice of $T_{win}$ is guided by two primary considerations:
(a) \textit{Accuracy:} $T_{win}$ must not be too short to avoid the influence of random network fluctuations, which could result in erratic and inaccurate rate measurements;
(b) \textit{Responsiveness:} $T_{win}$ must not be excessively long, as this would delay the detection of a sudden and sharp decline in available bandwidth at the last-mile bottleneck router, thereby hindering timely packet discarding and potentially degrading real-time video analytics performance.
To address these constraints, $T_{win}$ is judiciously set to 100 ms, ensuring a balance between accuracy and responsiveness.

\section{Implementation of \SYS}
\label{appendix:detailed_implementation}

%Building on this, we show the workflow of \SYS in Figure \ref{fig:design-overview}.
Given that most last-mile bottleneck routers in edge networks are Linux-based \cite{Anurag, vieira2020fast} and support eBPF programs \cite{vieira2020fast}, we implemented phase-transition-based packet discarding with eBPF. Our program was applied to the egress path of network packets at the last-mile bottleneck router in our testbed through a new traffic control queue discipline.

%A more detailed implementation of \SYS can be found in Appendix \ref{appendix:detailed_implementation}.

We implemented the client based on the H.264 encoding \cite{H264} and RTP packetization in aiortc’s $\mathit{rtcrtpsender.\_run\_rtp()}$. 
In aiortc’s $\mathit{rtcrtpsender.\_next\_encoded\_frame()}$, each incoming frame is divided into $N\times$N blocks, where $N$ is a system parameter and each block individually encoded as an I/P-frame sequence with H.264 encoder.  
The block-level metadata for each frame is stored in $\mathit{framemarking\_list}$ using the custom $\mathit{FrameMarking}$ class, which is then passed back to $\mathit{rtcrtpsender.\_run\_rtp()}$, and embedded into the RTP header extension for transmission.

%Building on this, we show the workflow of \SYS in Figure \ref{fig:design-overview}.
%Each real-time video frame is first divided into multiple blocks of size $4\times4$, and each block is then encoded individually for subsequent RTP packetization.
%The block-aware RTP header extension augments the original RTP packets by adding block-to-frame metadata (including indicators for block start \textbf{S} and end \textbf{E}, and positional coordinates (\textbf{X}, \textbf{Y})), block priority \textbf{Pri}, and discarded block ranges (\textbf{Dis}, \textbf{Rem}).
%Details for RTP header extension are shown in Figure \ref{fig:Design-1}.

We implemented the $\mathit{rtp\_ext\_header}$ structure to support the parsing of block-level data from each RTP packet, and packets without the RTP header extension can pass through without any impact.
We used the $\mathit{bpf\_map\_def\,SEC("maps")}$ structure to record the current state, packet discarding priority, enqueue/dequeue rate, and other metrics. 
They are stored in the kernel's memory of the Linux-based router, allowing the eBPF program to access them with $\mathit{bpf\_map\_lookup\_elem()}$ when making packet discarding decisions for new packets.

%Subsequently, the RTP packets with extended headers are transmitted by the packet sender and then arrive at the bottleneck router.
%We modify the existing traffic control queue by adding a queue discipline that attaches an eBPF \cite{vieira2020fast} program to the egress path of network packets. 
%The state-transition-based packet discarding is implemented in the eBPF program to manage outgoing traffic. 
%Specifically, the eBPF program measures the enqueue and dequeue rates of the real-time video stream and makes appropriate packet discarding decisions by considering both the frame structure and network conditions.

We implemented the video analytics server in \SYS based on the H.264 decoding and RTP depacketization in aiortc’s $\mathit{rtcrtpreceiver}.\_\mathit{handle}\_\mathit{rtp}\_\mathit{packet}()$.
The block-level metadata in the RTP header extension is extracted by the modified function $\mathit{Jitterbuffer.add()}$, reconstructing the new frame from multiple blocks.

Building on this, we show the workflow of \SYS in Figure \ref{fig:design-overview}.
Each real-time video frame is first divided into multiple blocks of size $N\times N$, and each block is then encoded individually for subsequent RTP packetization.
The block-aware RTP header extension augments the original RTP packets by adding block-to-frame metadata (including indicators for block start \textbf{S} and end \textbf{E}, and positional coordinates (\textbf{X}, \textbf{Y})), block priority \textbf{Pri}, and discarded block ranges (\textbf{Dis}, \textbf{Rem}).
Details for RTP header extension are shown in Figure \ref{fig:Design-1}.

Subsequently, RTP packets with extended headers are transmitted by the packet sender and then arrive at the last-mile router.
We modified the existing traffic control queue by adding a queue discipline that attaches an eBPF \cite{vieira2020fast} program to the egress path of network packets. 
The phase-transition-based packet discarding is implemented in the eBPF program to manage outgoing traffic. 
Specifically, the eBPF program measures the enqueue and dequeue rates of the real-time video stream and makes appropriate packet discarding decisions by considering both the frame structure and network conditions.

Finally, the non-discarded RTP packets are received by the server, where they are then RTP-unpacketized and decoded into video frames using the block-aware information stored in the extended RTP headers.
The decoded frames are then processed by the video analytics model, and the analysis results are sent back to the client.
Additionally, priority-based video analytics feedback calculates the RAI of each block within a frame based on the analysis results, determining the block priority. 
Priorities are then written into our custom application-specific RTCP structure and returned to the client.
Details for custom RTCP structure are shown in Figure \ref{fig:Design-8}.

\begin{figure}[!t]
    \vspace{0mm}
    \centering
    \subfigbottomskip=1mm %设置第二行子图与第一行子图的距离，即下面的头与上面的脚的距离
    \subfigcapskip=0mm %设置子图与子标题之间的距离
    % 第零行
    \subfigure[Enqueue and dequeue rates.]{
        \includegraphics[width=0.46\linewidth]{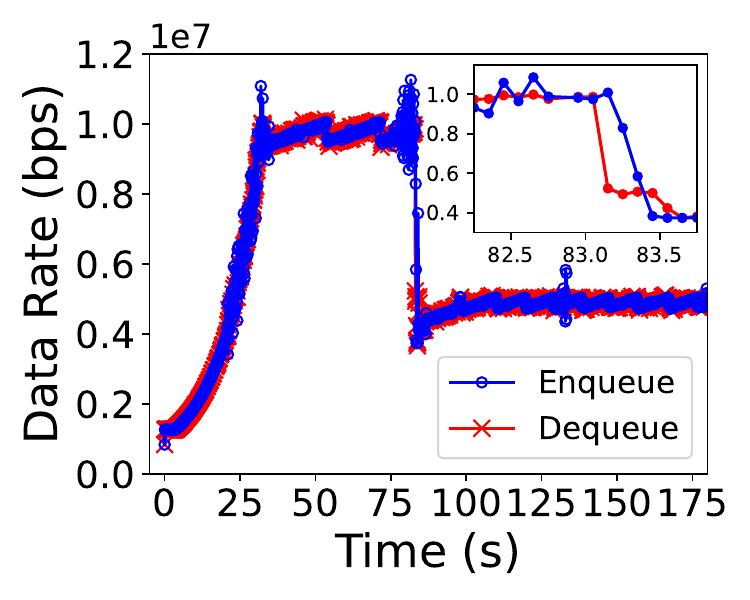}
        \label{fig:Eva_0_1}
    }\hfill
    \subfigure[Number of packets.]{
        \includegraphics[width=0.46\linewidth]{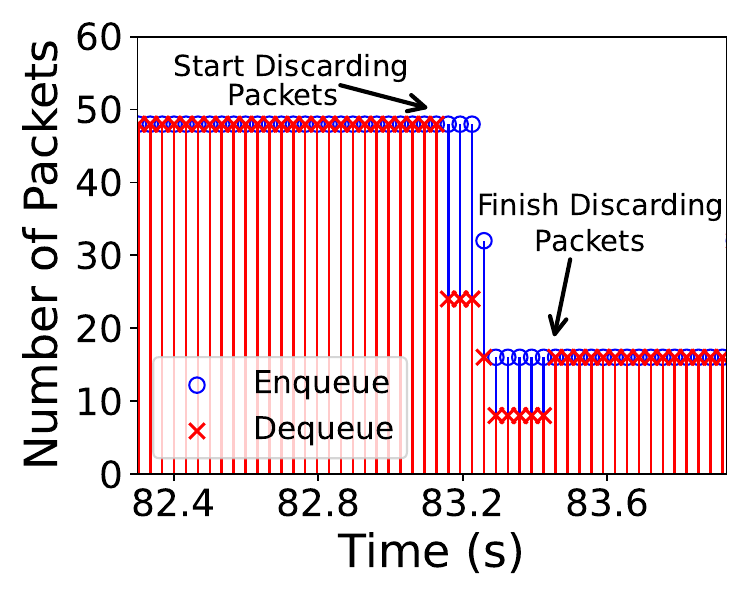}
        \label{fig:Eva_0_2}
    }\\
    % 第一行
    \subfigure[Frame delays with time.]{
        \includegraphics[width=0.46\linewidth]{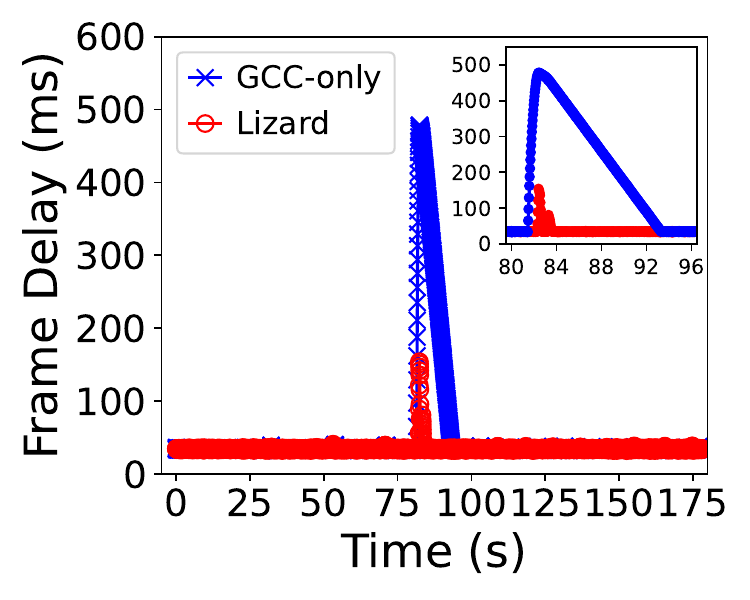}
        \label{fig:Eva_1}
    }\hfill
    \subfigure[Accuracy with time.]{
        \includegraphics[width=0.46\linewidth]{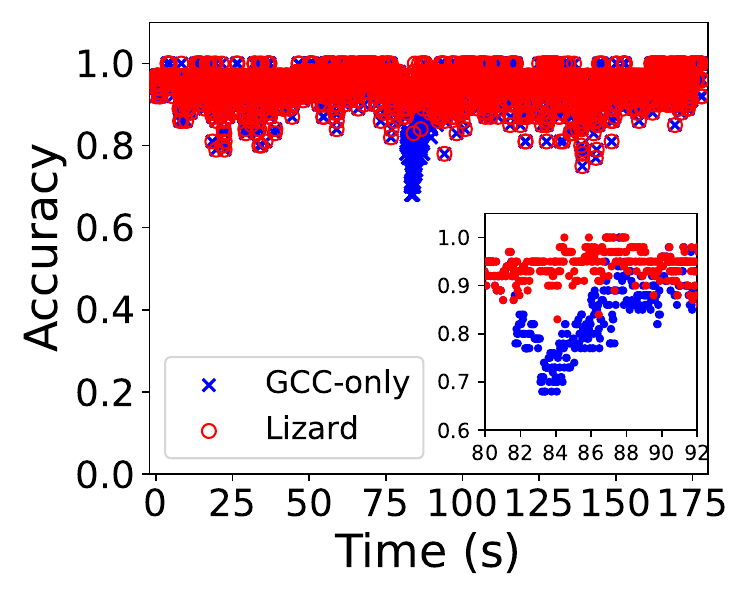}
        \label{fig:Eva_2}
    }\\
    % 第二行
    \subfigure[CDF for frame delays.]{
        \includegraphics[width=0.46\linewidth]{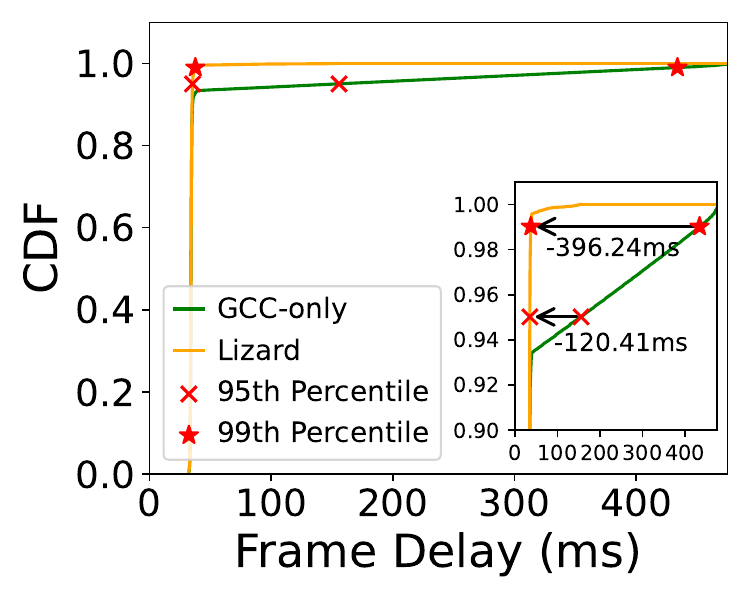}
        \label{fig:Eva_3}
    }\hfill
    \subfigure[CDF for accuracy.]{
        \includegraphics[width=0.46\linewidth]{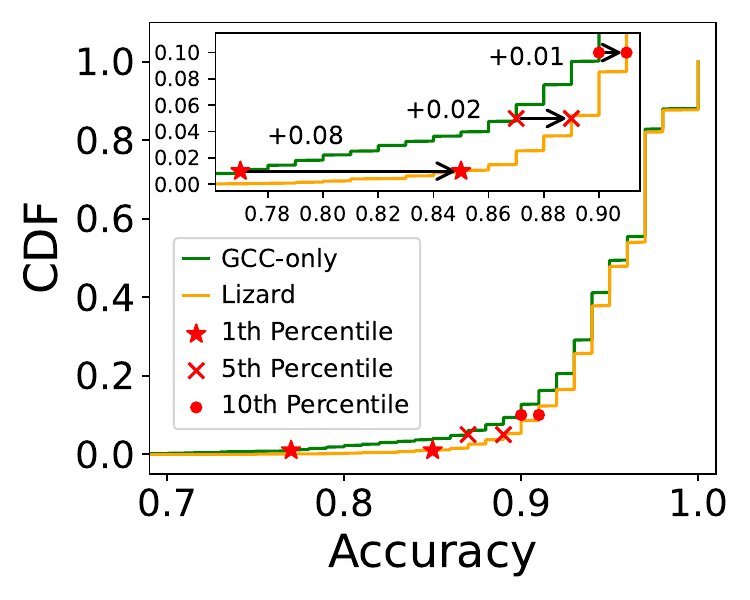}
        \label{fig:Eva_4}
    }
    \caption{A performance drill-down for \SYS.}
    \label{fig:Eva_0_1_2_3_4}
    \vspace{0mm}
\end{figure}

\section{Performance Drill Down}\label{sec:drilldown}
We performed a closer examination of \SYS's performance by demonstrating its effectiveness with a specific analytics program. 
In Figure \ref{fig:Eva_0_1_2_3_4}, we show how \SYS responds to a drastic ABW drop from 10 Mbps to 5 Mbps around 90 seconds. This includes the enqueue and dequeue rates, proactive packet discarding, frame delay, and accuracy variations.

As shown in Figure \ref{fig:Eva_0_1}, around 90 seconds, the dequeue rate of the considered video stream experiences a sharp drop from 9.9 Mbps to 5.2 Mbps due to the drastic ABW degradation.
However, the enqueue rate of the video stream remains stable at around 10 Mbps. 
Consequently, \SYS detects the drastic ABW degradation through $DetectAbwDrop$ in the \textbf{INIT} phase and transitions to the \textbf{PRE} phase.
Furthermore, once Lizard determines that the first packet of the next frame has been received, it enters the \textbf{PD} phase to initiate proactive packet discarding. 
As depicted in Figure \ref{fig:Eva_0_2}, approximately 50\% of the less-important packets are discarded in the \textbf{PD} phase.
Proactive packet discarding continues until \SYS detects ABW stabilization via $CheckAbwSteady$, i.e., when the dequeue and enqueue rates converge to an equilibrium.

Additionally, we present the temporal variations of frame delay and accuracy in Figures \ref{fig:Eva_1} and \ref{fig:Eva_2}. 
For a more intuitive understanding, we compared \SYS with the baseline GCC-only. 
Consistent with the analysis in \S\ref{sec:motivation}, GCC-only experiences significant lag in updating the bitrate and mitigating frame delay, resulting in an accuracy drop below 70\%. 
In contrast, \SYS can promptly mitigate queuing delays by proactively discarding packets upon detecting a drastic ABW drop at the last-mile router. 
Consequently, compared to GCC-only, \SYS reduces frame delay by 66.7\% and maintains accuracy at around 0.8.

To more clearly illustrate the extreme performance variations caused by the drastic ABW drop, we plot the CDFs of frame delays and accuracy for our 200-second running example. 
As shown in Figure \ref{fig:Eva_3}, we reduce the worst 0.1\% of frame delay from 480 milliseconds to approximately 120 milliseconds. 
In Figure \ref{fig:Eva_4}, we improve the worst 0.1\% of accuracy from 0.7 to around 0.8.
In summary, the above results validate the effectiveness of our proposed \SYS.